\DeclareMathOperator*{\argmin}{arg\,min}
\newcommand{\Rmnum}[1]{\expandafter\@slowromancap\romannumeral #1@}
\begin{document}
\title{Neuromorphic Split Computing with Wake-Up Radios: Architecture and Design  via Digital Twinning}
\author{Jiechen Chen, \IEEEmembership{Member,~IEEE}, Sangwoo Park,  \IEEEmembership{Member,~IEEE}, Petar Popovski,~\IEEEmembership{Fellow,~IEEE}, H. Vincent Poor,  \IEEEmembership{Life Fellow,~IEEE}, Osvaldo Simeone,~\IEEEmembership{Fellow,~IEEE}
\thanks{J. Chen, S. Park, and O. Simeone are with the King’s Communications, Learning and Information Processing (KCLIP) lab within the Centre for Intelligent Information Processing Systems (CIIPS) at the Department of Engineering, King’s College London, London, WC2R 2LS, UK (email:\{jiechen.chen, sangwoo.park, osvaldo.simeone\}@kcl.ac.uk). O. Simeone is also with the Department of Electronic Systems, Aalborg University, 9100 Aalborg, Denmark. P. Popovski is with the Department of Electronic Systems, Aalborg University, 9100 Aalborg, Denmark (email: petarp@es.aau.dk). H. Vincent Poor is with the Department of Electrical and Computer Engineering, Princeton University, Princeton, NJ 08544 USA (e-mail:poor@princeton.edu).\\
This work was supported by the European Union’s Horizon Europe project CENTRIC (101096379),  by~an Open Fellowship of the EPSRC (EP/W024101/1), by the EPSRC project (EP/X011852/1), by the Villum Investigator Grant ``WATER" from the Velux Foundations, Denmark, and by the U.S. National Science Foundation under Grant ECCS-2335876. 
 }
\vspace*{-0.9cm}
}

\maketitle

\vspace{-1.7cm}
\begin{abstract}
    Neuromorphic computing leverages the sparsity of temporal data to reduce processing energy by activating a small subset of neurons and synapses at each time step. When deployed for split computing in edge-based systems, remote neuromorphic processing units (NPUs) can reduce the communication power budget by communicating asynchronously using sparse impulse radio (IR) waveforms. This way, the input signal sparsity translates directly into energy savings both in terms of computation and communication. However, with IR transmission, the main contributor to the overall energy consumption remains the power required to maintain the main radio on. This work proposes a novel architecture that integrates a wake-up radio mechanism within a split computing system consisting of remote, wirelessly connected, NPUs. A key challenge in the design of a wake-up radio-based neuromorphic split computing system is the selection of thresholds for sensing, wake-up signal detection, and decision making. To address this problem, as a second contribution, this work proposes a novel methodology that leverages the use of a digital twin (DT), i.e., a simulator, of the physical system, coupled with a sequential statistical testing approach known as Learn Then Test (LTT) to provide theoretical reliability guarantees. The proposed DT-LTT methodology is broadly applicable to other design problems, and is showcased here for neuromorphic communications.  Experimental results validate the design and the analysis, confirming the theoretical reliability guarantees and illustrating trade-offs among reliability, energy consumption, and informativeness of the decisions.
\end{abstract}

\begin{IEEEkeywords}
Neuromorphic computing, spiking neural networks, wake-up radios, neuromorphic wireless communications, reliability.
\end{IEEEkeywords}

\IEEEpeerreviewmaketitle
\newtheorem{definition}{\underline{Definition}}[section]
\newtheorem{fact}{Fact}
\newtheorem{assumption}{Assumption}
\newtheorem{theorem}{Theorem}
\newtheorem{lemma}{\underline{Lemma}}[section]
\newtheorem{proposition}{\underline{Proposition}}[section]
\newtheorem{corollary}[proposition]{\underline{Corollary}}
\newtheorem{example}{\underline{Example}}[section]
\newtheorem{remark}{\underline{Remark}}[section]
\newcommand{\mv}[1]{\mbox{\boldmath{$ #1 $}}}
\newcommand{\mb}[1]{\mathbb{#1}}
\newcommand{\Myfrac}[2]{\ensuremath{#1\mathord{\left/\right.\kern-\nulldelimiterspace}#2}}
\newcommand\Perms[2]{\tensor[^{#2}]P{_{#1}}}
\newcommand{\note}[1]{[\textcolor{red}{\textit{#1}}]}

\section{Introduction}
\subsection{Context and Motivation}

 Neuromorphic processing units (NPUs), such as Intel's Loihi or BrainChip's Akida,  leverage the sparsity of temporal data to reduce processing energy by activating a small subset of neurons and synapses at each time step \cite{davies2021advancing, jang2019introduction}. This mechanism implements \emph{spike}-based signaling, whereby information is exchanged in the timing of the synaptic activation.  The opportunistic activation of neurons and synapses distinguishes NPUs from conventional deep learning accelerators such as graphical processing units (GPUs) or tensor processing units (TPUs), making NPUs particularly attractive for time-series data.

 \begin{figure}[t!]
	\centering
	\includegraphics[width=3.5in]{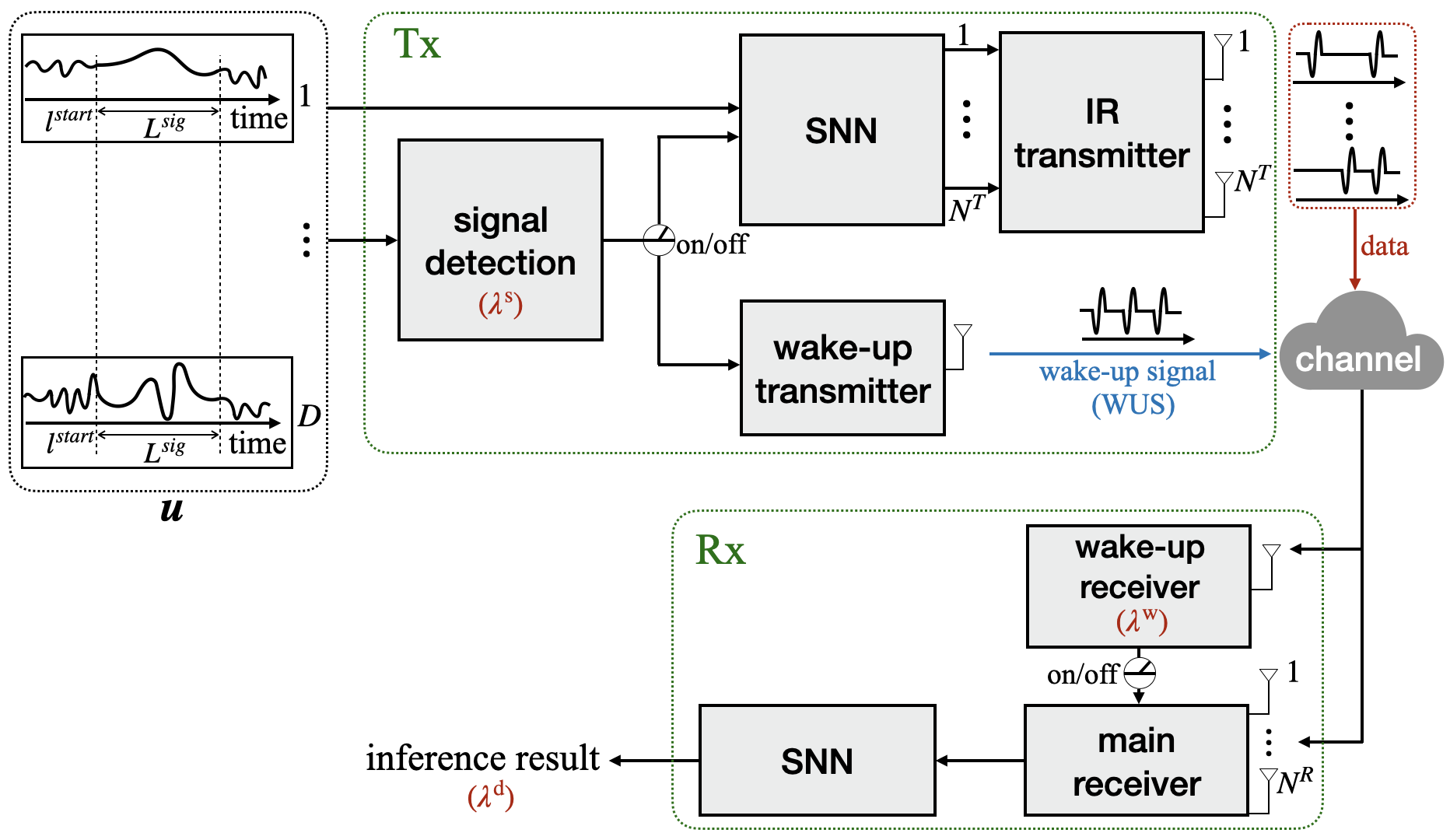}
	\caption{In this work, we propose a low-power wake-up radio aided wireless split computing system, which operates through the following steps. (\emph{i}) \emph{Signal detection at the Tx}: The sensor captures a time-series data $\mv u$ for $L^{\rm max}$ time steps, containing meaningful information from an unknown time $l^{\rm start}$ for a duration of $L^{\rm sig}$. A change detector is applied simultaneously to determine whether the sensed sequence contains a signal of interest. (\emph{ii}) \emph{Wake-up signal transmission}: If a signal of interest is detected at some specific time, the wake-up Tx and encoding SNN are turned on, and a WUS is transmitted by the wake-up Tx. (\emph{iii}) \emph{Data transmission}: After a fixed delay following the transmission of the WUS, the input signal $\{\mv u_l\}_{l=l^{\rm start}}^{L^{\rm max}}$ is processed by the encoding NPU, and the output spikes are modulated using impulse radio (IR) and transmitted over the wireless channel to the main Rx. (\emph{iv}) \emph{Wake-up signal reception and activation of the main radio}: The WUS is detected by the wake-up Rx, leading to the activation of the main Rx. (\emph{v}) \emph{Decision Making}: Upon waking up of the main Rx, the NPU at the receiver side processes the signal received by the main Rx to make an inference decision. Our goal is to optimize the threshold applied by signal detection, WUS detection, and decision making in order to provably control the average loss of the decision to a predetermined level, while minimizing the overall energy consumption.   }
	\label{model}
\end{figure} 
 
 As illustrated in Fig. 1, when deployed for \emph{split computing} in edge-based systems \cite{matsubara2020split, matsubara2020head}, remote NPUs, each carrying out part of the computation, can reduce the communication power budget by communicating asynchronously using sparse \emph{impulse radio} (IR) waveforms \cite{skatchkovsky2020end, 10016643, 9997098}, a form of  ultra-wide bandwidth (UWB) spread-spectrum signaling. After extensive research activity in the early 2000s (see, e.g., \cite{zhao2002performance}), UWB has recently  re-emerged as a prominent solution for low-power, low-range connectivity. For example, Apple has incorporated a UWB radio in the most recent iPhone models, from iPhone 11 onwards, for precision ranging \cite{apple}. Furthermore, the IEEE 802.15.4z standard includes the items Enhanced UWB Physical Layers (PHYs) and Associated Ranging Techniques which cover the reliability, accuracy, and security of UWB communications \cite{9179124}. Additionally, several high-profile academic proposals have advocated for the use of impulse radio in next-generation wireless interfaces \cite{fettweis20216g}.

Using IR waveforms, the input signal's sparsity, which depends on the semantics of the information processing task, translates directly into energy savings both in terms of computation and communication. This property has been leveraged in recent works such as \cite{he2022implantable} and \cite{lee2024asynchronous} for innovative applications including the sensing of peripheral nerves and brain-computer interfaces. Both references above present hardware validations of the concept, with \cite{lee2024asynchronous} reporting on a testbed involving 78 sensors (see also the news story\footnote{https://spectrum.ieee.org/brain-machine-interface-2667619198}). 

However, the power savings afforded by sparse transmitted signals  are limited to the transmitter's side, which can transmit impulsive waveforms only at the times of synaptic activations. The main contributor to the overall energy consumption remains the power required to maintain the main radio on \cite{7990121, jouni20221, win1998impulse}. To address this architectural problem, as seen in Fig. 1, this work proposes a novel architecture that integrates a \emph{wake-up radio} mechanism within a split computing system consisting of remote, wirelessly connected, NPUs. 

 Wake-up radios introduce a low-cost radio at the transmitter and at the receiver. The wake-up transmitter monitors the sensed signals, deciding when to transmit a \emph{wake-up signal} (WUS) to the receiver. The wake-up receiver operates at a much reduced power as compared to the main receiver radio, and its sole purpose is detecting the WUS. Upon detection of the WUS, the main radio is activated \cite{jouni20221, hoglund20243gpp, 7990121,yomo2012wake,shiraishi2020content}. 

 A key challenge in the design of a wake-up radios is the selection of thresholds for sensing and WUS detection, and decision making. A conventional solution would be to calibrate the thresholds via \emph{on-air} testing, trying out different thresholds via testing on the actual physical system. On-air calibration would be expensive in terms of spectral resources, and there is generally no guarantee that the selected thresholds would provide desirable performance levels for the end application. 

To address this design problem, as illustrated in Fig. 2, this paper proposes a novel methodology that leverages the use of a \emph{digital twin}, i.e., a simulator, of the physical system, coupled with a sequential statistical testing approach that provides theoretical reliability guarantees \cite{laufer2022efficiently, angelopoulos2021learn}.  

\subsection{Related Work}
\emph{Neuromorphic communications}: Neuromorphic communication, introduced in \cite{skatchkovsky2020end}, integrates event-driven principles from neuromorphic computing into wireless communication systems for efficient sensing, communications, and decision-making. Reference \cite{10016643} presented an architecture for wireless cognition that incorporates neuromorphic sensing, processing, and IR communications for multiple devices, leveraging time hopping for asynchronous multi-access. Motivated by the potential of IR for radar sensing \cite{bozorgi2021rf}, reference \cite{9997098} introduced a neuromorphic integrated sensing and communication system, which targets simultaneous data transmission and target detection. In \cite{dakic2023spiking}, a neuromorphic computing-based detector was implemented at a satellite receiver, whose goal was to detect Internet-of-Things signals in the presence of significant uplink interference. A hardware implementations of the system introduced in \cite{10016643} was detailed in \cite{lee2024asynchronous}, showing the potential of the approach to scale to thousands of nodes. The solution presented in \cite{lee2024asynchronous} leveraged energy harvesting.

Decentralized implementations of NPUs were studied in \cite{borsos2022resilience}, while assuming conventional digital communications. A corresponding optimal resource allocation problem was investigated in \cite{liu2024energy}.

\emph{Impulse radio for neuromorphic communications}: IR has been proposed for wireless communication of digital packets between SNN chips in \cite{cassidy2008impulse}, and for transmitting time-encoded analog signals, similar to those measured by neuromorphic sensors, for biomedical applications in \cite{shahshahani2015all}. Additionally, a combination of neuromorphic sensing, time-based computing, and IR has been utilized in \cite{peper2015low} to implement a consensus method based on device-to-device local communications for computing the maximum of scalar observations. In \cite{10016643, skatchkovsky2020end}, IR waveform were used to modulate the spiking signal for wireless transmission. 

\emph{Wake-up radio}:  Wake-up radios can reduce energy consumption in wireless communication systems by keeping the main receiver radio off until an incoming signal of interest is detected \cite{7990121}. In 3GPP Release 18, two wake-up receiver (WUR) architectures are introduced, using either a radio frequency envelope detector or an on-chip local oscillator approach \cite{hoglund20243gpp}. The first type of architecture is characterized by low complexity, low cost, and extremely low energy consumption. In contrast, the second architecture requires more complex components, like on-chip local oscillators. This results in higher energy consumption, but the benefits include better sensitivity and robustness to interferers.

For the design of WUS, two main candidates in 3GPP Release 18 are on-off keying (OOK)-based WUS and OFDM-based WUS \cite{hoglund20243gpp}. The OFDM-based signal structure does not require significant changes on the transmitter, while OOK-based WUS is an attractive choice for receivers with low complexity. 

Wake-up radios have been integrated into a number of wireless systems. For example, in \cite{pegatoquet2018wake}, a multi-access protocol was introduced that facilitates fully asynchronous communication among network devices, while reference \cite{tang2016tight} focused on WURs for wireless local area networks. Reference \cite{jouni20221} proposed a neuromorphic enhanced WUR, tailored for brain-inspired applications using OOK-modulated WUSs.

\emph{Digital twins for wireless communication}: Digital twinning is currently viewed as a promising enabling tool for the design and monitoring of next-generation wireless systems implementing machine learning modules \cite{khan2022digital}. For example, reference \cite{ruah2023bayesian}  proposed a Bayesian framework for the development of a DT platform aimed at the control, monitoring, and analysis of a multi-access communication system. The papers \cite{jiang2023digital} and \cite{morais2024localization} proposed the use of digital twinning for the design of beam prediction and localization, respectively.

\emph{Guaranteed reliability for machine learning in wireless communications}: Conformal prediction (CP) uses past experience to determine precise levels of confidence in new predictions \cite{shafer2008tutorial}. This approach guarantees that, with a specified confidence level, future predictions will fall within the prediction regions, thereby providing reliable estimates of uncertainty. For the application of CP to wireless communication, \cite{cohen2023calibrating} applied CP to the design of AI for communication systems in conjunction with both frequentist and Bayesian learning, focusing on the key tasks of demodulation, modulation classification, and channel prediction.

\emph{Learn then Test} (LTT) is a framework for the selection of hyperparameters in  pre-trained machine learning models that  satisfy finite-sample statistical guarantees \cite{angelopoulos2021learn}. Like CP and conformal risk control (CRC), it relies on the use of  calibration data, but it does not require the monotonicity assumption of CRC. As a result, it applies to more general settings, such as problems with multiple hyperparameters. Being a generic framework, LTT requires a dedicated effort to be tailored to a specific problem setting. To the best of our knowledge, ours is the first work that proposes a methodology for the application of  LTT to the design of communications system.

Regarding the comparison with artificial neural networks (ANNs) in the context of wireless communication, reference \cite{skatchkovsky2020end} has shown that split computing based on SNNs and impulse radio can outperform frame-based ANN-based solutions, thanks to the benefits of event-driven communication and processing. Reference \cite{10016643} extended these benefits to multi-device scenarios, for which impulse radio transmission can facilitate energy-efficient multi-access protocols. This was also verified experimentally by \cite{lee2024asynchronous} using a testbed involving 78 sensors, built to operate according to the principles of neuromorphic communications. Against this background, our work offers additional energy saving advantages on top of those already reported in these papers by implementing a wake-up radio receiver.

\subsection{Main Contributions}
The contribution of this paper is twofold. First, as shown in Fig.~\ref{model}, we introduce a low-power wake-up radio aided neuromorphic wireless split computing architecture, whose goal is to carry out a remote inference task in an energy efficient way. Second, we propose a novel design methodology that combines LTT with digital twinning. This methodology, dubbed DT-LTT, enhances the spectral efficiency of a direct application of LTT \cite{angelopoulos2021learn} via a digital twin-based pre-selection of candidate thresholds for sensing, detection, and decision making. The main contributions of this paper are summarized as follows.

 \emph{Architecture}: We introduce a wake-up radio aided neuromorphic wireless split computing architecture, which combines the energy savings resulting from event-driven computing at the transmitter and receiver, as well as from IR transmission, with the energy savings made possible at the receiver via the introduction of a WUR. We summarize the merits of different split computing schemes in Table I, highlighting the capacity of the architecture proposed in this work to attain low energy consumption duty cycle at both transmitter and receiver via the introduction not only of IR at the transmitter but also of a WUR at the receiver. 

 As illustrated in Fig. 1, in the proposed  architecture, the NPU at the transmitter side remains idle until a signal of interest is detected by the signal detection module. Subsequently, a WUS is transmitted by the wake-up transmitter over the channel to the wake-up receiver, which activates the main receiver. The IR transmitter modulates the encoded signals from the NPU, and sends them to the main receiver. The NPU at the receiver side then decodes the received signals and make an inference decision.

\emph{Digital twin-aided design methodology with reliability guarantees}: In order to select the thresholds used at transmitter and receiver for sensing, WUS detection, and decision making, we propose a novel design methodology that integrates the LTT framework \cite{angelopoulos2021learn} with  digital twinning. The proposed methodology, dubbed DT-LTT, is of broader interest as it can be applied to any  communication system requiring the selection of hyperparameters via on-air transmission. 

To explain, consider any setting that requires the selection of hyperparameters affecting the operation of a wireless link, here the mentioned thresholds. A direct application of LTT \cite{angelopoulos2021learn} would sequentially test candidate hyperparameters via the estimation of the target performance metrics through transmissions on the wireless channel. This way, the designer would be limited to testing a few candidate hyperparameters, given the limited availability of spectral resources. 

To reduce the spectral overhead caused by hyperparameter calibration, we propose executing LTT through digital twinning. Specifically, the digital twin is leveraged to pre-select a sequence of hyperparameters to be tested using on-air calibration via LTT.  The proposed DT-LTT calibration procedure is proved to guarantee reliability of the receiver's decisions  irrespective of the fidelity of digital twin and of the data distribution. Indeed, the fidelity of the digital twin only affects the energy consumption and the informativeness of the output produced by the calibrated system. In this regard, the proposed method also  supports the  optimization of a weighted criterion involving  energy consumption and informativeness of the receiver's decision.

 \emph{Numerical evaluations}: Extensive numerical results are provided that demonstrate the advantages of the proposed digital twin-based design approach.

  \begin{table}[t!]
      \caption{Power consumption for split computing strategies operating on temporally sparse signals, e.g., for monitoring applications.}
    \centering
    \begin{tabular}{|>{\centering\arraybackslash}m{3.6cm}|>{\centering\arraybackslash}m{2cm}|>{\centering\arraybackslash}m{2cm}|}
        \hline
        \textbf{scheme (communication/computation)} & \textbf{low transmit-power duty cycle} & \textbf{low receive-power duty cycle} \\
        \hline
        frame-based/ANNs & \ding{55} & \ding{55} \\
        \hline
        event-driven/SNNs \cite{skatchkovsky2020end, 10016643, racz2022full} & \checkmark & \ding{55} \\
        \hline
        event-driven/SNNs with wake-up radio (this work) & \checkmark & \checkmark \\
        \hline
    \end{tabular}
\end{table}

\vspace{-0.3cm}
\subsection{Organization}
The remainder of the paper is organized as follows. Section \ref{sec:System Model} presents the system model for the proposed wake-up radios assisted neuromorphic split computing system. Section \ref{section 3} describes the neuromorphic receiver processing with wake-up radio and the problem of interest, while the reliable hyperparameters optimization algorithm is proposed in Section \ref{section 4}. Experimental setting and results are described in Section \ref{exp}. Finally, Section \ref{con} concludes the paper.

\section{Background}
In this section, we provide background material that will be used in this work to introduce the proposed neuromorphic split computing system. Specifically, we first review  reliable decision-making via prediction sets and CP \cite{shafer2008tutorial}; and then we discuss hyperparameter optimization via multiple hypothesis testing \cite{angelopoulos2021learn}.

\subsection{Reliable Decision-Making via Prediction Sets}
Reliable decision-making in machine learning requires not only accurate predictions but also a quantification of the uncertainty associated with the predictions. Conventional models often provide point predictions, which, while useful, fail to convey the uncertainty inherent in the model’s decision-making process. This subsection reviews CP as a statistical method to calibrate prediction sets to ensure finite-sample coverage guarantees.

In  classification problems, a machine learning model is trained to map an input $\mv u$ into one out of a discrete set $\{1,\ldots, C\}$ of class labels. The  goal is to predict the most likely class $\hat{c}$ given a new input $\mv u$, along with a confidence score. It is well known that machine learning models, particularly with larger and potentially more accurate architectures, tend to be overconfident, offering an unreliable estimate of their uncertainty  \cite{achiam2023gpt, huang2024calibrating, vadera2020ursabench}.

CP addresses this limitation by providing a set of possible outcomes $\mathcal{C}$ that are statistically likely to contain the true class label $c$ with a specified confidence level $1-\alpha$, i.e.,
\begin{align}
    \Pr(c\in\mathcal{C}) \geq 1-\alpha. \label{setrelia}
\end{align}
CP leverages the scores $s_c$ associated by the underlying model to each class $c$. Scores $s_c$ are assumed here to be negatively oriented, i.e., they are smaller for classes on which the model is most confident. An example is given by the standard log-loss \cite{simeone2022machine}.  Given the scores $s_c$ for all classes $c\in\{1,...,C\}$, CP  constructs the predicted  set by including all classes whose score is below a threshold $\lambda^{\rm d}$ as 
\begin{align}
    \mathcal{C} = \{c\in\{1,\ldots, C\}: s_c \leq \lambda^{\rm d}\}, 
\end{align}
where the threshold $\lambda^{\rm d}$ is obtained based on a held-out calibration set. As detailed in Section \ref{designp}, in this work, we treat the threshold $\lambda^{\rm d}$ as one of the hyperparameters to be optimized by the system.

\subsection{Reliable Hyperparameter Optimization via Multiple-Hypothesis Testing}
In this subsection, we introduce LTT, a reliable hyperparameter optimization framework based on multiple-hypothesis testing. Consider a machine learning model whose operation is controlled by a  hyperparameter vector $\mv \lambda$, such as the learning rate for fine-tuning or the temperature in generative models \cite{feurer2019hyperparameter}. LTT searches through a pre-defined set of candidate hyperparameter vectors $\Lambda=\{\mv \lambda_1, \mv \lambda_2, \ldots, \mv \lambda_{|\Lambda|}\}$ to produce a subset of hyperparameters that are guaranteed to control the risk of the system. 

To elaborate, define as $R(\mv \lambda)$  a population risk measure that we wish to control, such as the probability of a classification error. LTT associates with each candidate hyperparameter $\mv \lambda_j\in\Lambda$, with $j=1,\ldots, |\Lambda|$,  the null hypothesis \begin{equation}\mathcal{H}(\mv \lambda_j): R(\mv \lambda_j)>\alpha,\end{equation} where $\alpha$ is the maximum tolerated risk. Accordingly, the null hypothesis $\mathcal{H}(\mv \lambda_j)$ posits that hyperparameter $\mv \lambda_j$ is unreliable. Rejecting this hypothesis hence entails a decision that hyperparameter $\mv \lambda_j$ is reliable, in the sense that it meets the reliability condition $R(\mv \lambda_j)\leq \alpha$. 

The goal of LTT is to identify a subset $\Lambda^{\rm rel}\subseteq \Lambda$ of hyperparameter vectors such that the condition \begin{equation} \Pr[\exists \mv \lambda \in \Lambda^{\rm rel} \text{ s.t. } R(\mv \lambda)>\alpha] \leq 1-\delta \end{equation} is satisfied for some target outage probability $\delta$. Accordingly, the identified set of hyperparameters $\Lambda^{\rm rel}$ contains no unreliable hyperparameter $\mv \lambda$ with probability at least $1-\delta$.

LTT relies on the evaluation of a p-value $p(\mv \lambda_j)$ for each null hypothesis $\mathcal{H}(\mv \lambda_j)$ \cite{rice2007mathematical}, and hence for each candidate hyperparameter $\mv \lambda_j$. To this end, an empirical estimate $\hat{R}(\mv \lambda_j)$ of the risk $R(\mv \lambda_j)$ is obtained by using existing data or real-world testing. The p-value measures the probability of obtaining an estimate at least as small as $\hat{R}(\mv \lambda_j)$ when assuming the validity of the null hypothesis $\mathcal{H}(\mv \lambda_j)$ that the hyperparameter $\mv \lambda_j$ is not reliable.  The p-values are then combined using methods for the control of the family-wise error rate (FWER) such as Bonferroni or fixed sequence testing. In this work, we will leverage fixed sequence testing, which tests hyperparameters sequentially. As further detailed in Section \ref{onairc}, the testing order is ideally selected to consider hyperparameters in order of decreasing expected reliability.

\section{System Model} \label{sec:System Model}
As shown in Fig.~\ref{model}, we consider an end-to-end neuromorphic remote inference system, in which the receiver (Rx) collects information from a device in order to carry out a semantic task, such as segmentation. 

At the device, also referred to as transmitter (Tx), the sensor monitors the environment continuously to detect the start of a signal of interest. When the Tx detects a semantically relevant signal, the wake-up Tx is turned on to transmit the WUS, and the encoding NPU is also activated to process the input signal. The output of the NPU is buffered, and subsequently modulated and transmitted by the IR Tx after a given delay. Upon detecting the WUS, the wake-up Rx activates the main Rx, which starts receiving after a given delay. The received signal is then processed by a decoding NPU, which produces a final decision.

In this way, the proposed architecture combines the energy savings resulting from event-driven computing at Tx and Rx, as well as from IR transmission, with the energy savings made possible at the Rx via the introduction of a WUR.

We observe that, throughout this study, the presence of NPUs at the transmitter and receiver is accounted for by considering neural models that are suitable for implementation on neuromorphic hardware. This is detailed in the next section, and it follows the approach adopted in most works in the field such as \cite{10016643, wu2022little}. Note that libraries such as Intel's Lava also simulate the operation of NPUs by implementing suitable spiking neural models. We leave it as future work  to present a full implementation integrating software-defined radios, neuromorphic hardware, and neuromorphic sensors (see also \cite{ke2024neuromorphic,lee2024asynchronous} for some initial work in this direction).

\subsection{Sensing Model}
We assume that the relevant discrete-time signal captured by the sensor has a duration of $L^{\rm sig}$ samples, with each sample $\mv u_l$ being a $D$-dimensional vector. The duration $L^{\rm sig}$ is assumed to be known and deterministic. The signal of interest is semantically associated with label information $c$. We assume that the labels take values in a finite discrete set, but extensions to continuous quantities are direct. Furthermore, the signal is produced by an information source after a random delay of $l^{\rm start}$ time instants. Specifically, during an initial random period of $l^{\text{start}}-1$ samples, the device observes a signal containing semantically irrelevant information, e.g., noise. The samples of the signal of interest is presented to the device starting at time $l^{\rm start}$. Subsequently, the device again records irrelevant signals. 

The sensor is active for a period of time equal to $L^{\rm max} \geq L^{\rm sig}$ samples. The choice of $L^{\rm max}$ entails a trade-off between energy consumption and probability of fully observing the signal of interest of duration $L^{\rm sig}$. 

\begin{figure*}[htp]
	\centering
	\includegraphics[width=5.3in]{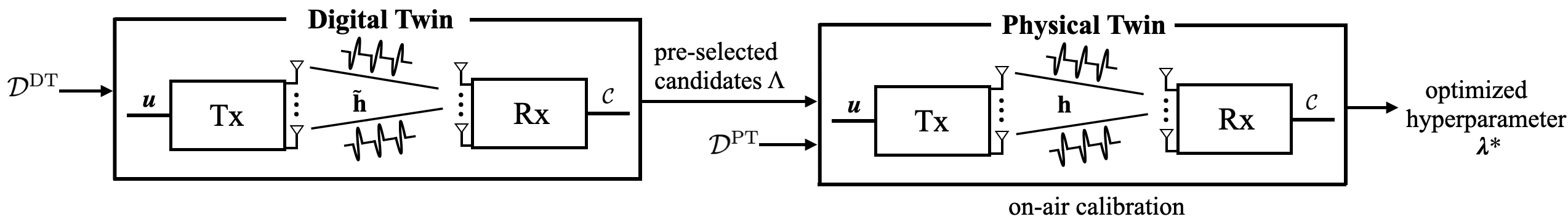}
	\caption{Hyperparameters optimization is carried out by leveraging a dataset $\mathcal{D}$ of data examples, as well as access to a simulator of the channel implemented in a digital twin. The simulator produces channel variables $\Tilde{\mathbf{h}}$ with a distribution $\Tilde{p}(\mathbf{h})$ that is generally mismatched with respect to the true distribution $p(\mathbf{h})$. In a first phase, the digital twin uses the simulator to pre-select a subset $\Lambda$ of candidate hyperparameters $\mv \lambda$. In a second phase, on-air calibration leverages transmission on the actual system (physical twin) to identify a solution $\mv \lambda^*$ that is guaranteed to satisfy the constraint in \eqref{eq:goal}.   }
	\label{dtpt}
\end{figure*} 

The sensed samples $\mv u_l$ for $l=1,2,\ldots$, are processed continuously by a \emph{signal detector} at the Tx to determine an estimate $\hat{l}^{\rm start}$ of the time $l^{\rm start}$. The signal detector updates a cumulative sum statistic $S_l$ at each time $l$ using the current sample $\mv u_l$ via an algorithm such as QUSUM \cite{qusum} or non-parametric change detection \cite{shin2022detectors}. A change is detected at time $l$ if the statistics $S_l$ exceeds a threshold $\lambda^{\rm s}$, i.e., $S_l > \lambda^{\rm s}$, and thus the wake-up Tx and encoding NPU are activated at time 
\begin{align}
    \hat{l}^{\rm start} = \min_{l\in\{1,\ldots, L^{\rm max}\}} \{S_l > \lambda^{\rm s}\}, \label{wakeup}
\end{align}
where the threshold $\lambda^{\rm s}$ is subject to optimization.

\begin{figure}[htp]
	\centering
	\includegraphics[width=3.2in]{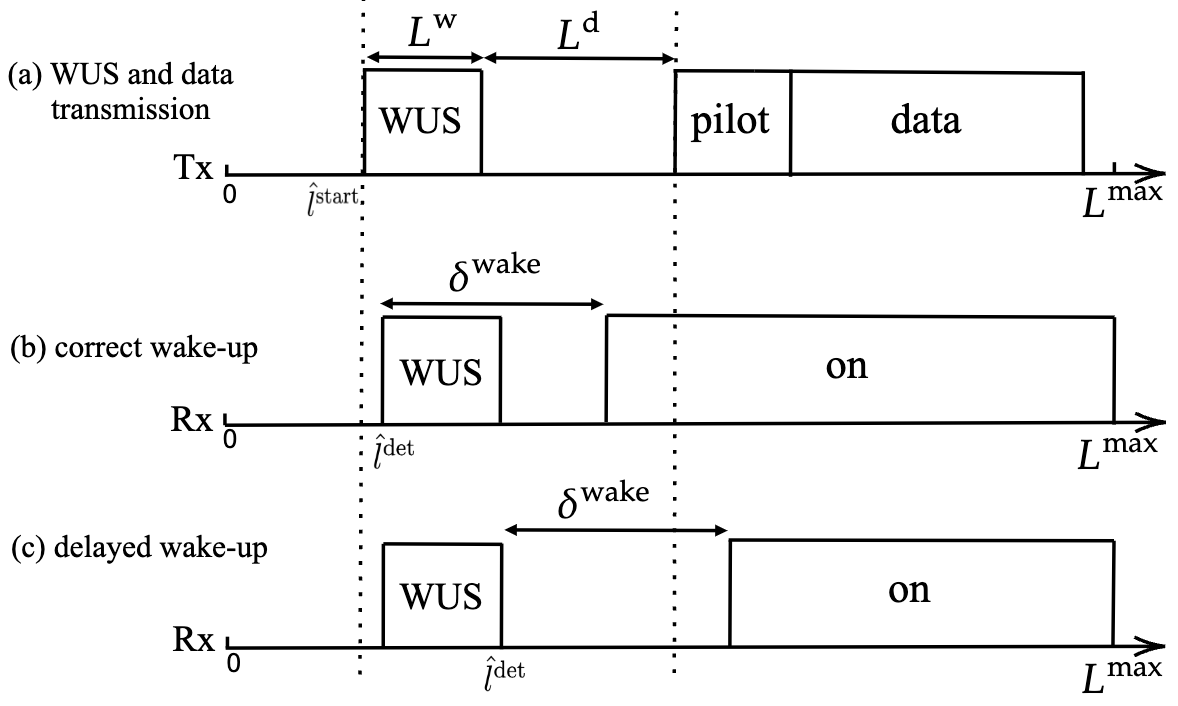}
	\caption{Illustration of the working flow of the Tx and Rx. (a) WUS and data transmission: The WUS is sent by the wake-up Tx once the signal of interest is detected at time $\hat{l}^{\rm start}$, followed by the transmission of the pilot and the data after $L^{\rm d}$ delay. (b) Correct wake-up: The wake-up receiver detects the WUS at time $\hat{l}^{\rm det}$ and activates the main receiver. The main receiver takes $\delta^{\rm wake}$ time to be fully activated. Importantly, the wake-up time of the main receiver precedes the commencement of data transmission. (c) Delayed wake-up: In this scenario, the main receiver wakes up after the data transmission has initiated, leading to data loss.    }
	\label{time}
\end{figure} 
\subsection{Neuromorphic Encoding}
Upon activation of the wake-up Tx at time $\hat{l}^{\rm start}$ in \eqref{wakeup}, an OOK-based WUS is transmitted for duration of $L^{\rm w}$ time steps. Following standard practice \cite{7990121}, as shown in Fig.~\ref{time} (top panel), data is then transmitted $L^{\rm d}$ time steps after the end of the WUS by IR Tx. The delay $L^{\rm d}$ accommodates channel delay spread, detection time of the wake-up Rx, as well as the wake-up latency of the main Rx \cite{7990121}.

The encoding NPU processes samples $\mv u_l$ starting from time $\hat{l}^{\rm start}$. For each time instant $l \in [\hat{l}^{\rm start}, L^{\rm max}]=\hat{l}^{\rm start}, \hat{l}^{\rm start}+1, \ldots, L^{\rm max}$, the encoding NPU produces an $N^{\rm T} \times 1$ vector 
\begin{align}
    \mv x_l=f_{\scalebox{0.7}{$\mv \theta^e$}}(\mv u_l) \label{spike}
\end{align}
from its $N^{\rm T}$ readout neurons. In \eqref{spike}, the vector $\mv \theta^e$ is the parameter vector of the encoding NPU. The output spiking vectors $\mv x_l$ for $l \in [\hat{l}^{\rm start}, L^{\rm max}]$ are buffered and transmitted in a first-in-first-out manner starting at time $\hat{l}^{\rm start}+L^{\rm w}+L^{\rm d}$, i.e., after the transmission of the WUS and the delay $L^{\rm d}$.

\subsection{IR Transmission Model}
The wake-up Tx is equipped with one antenna, while the IR transmitter has $N^{\rm T}$ antennas. Both transmitters adopts IR to modulate their respective transmitted signal $s_w(t)$ and $\{s_i(t)\}_{i=1}^{N^{\rm T}}$. Note that this is not a requirement for the wake-up radio, and is assumed here to facilitate a low-complexity implementation. Bandwidth expansion, leveraging time hopping (TH) \cite{win1998impulse}, is utilized to manage interference between antennas of the transmitting device during data transmission. 

Accordingly, each time step $l$ of the sensed signal $\mv u_l$ comprises $L^{\rm b} \geq 1$ chips on the radio channel, with each chip having a duration of $T_c$ seconds. Consequently, each time step $l$ spans $L^{\rm b}T_c$ seconds, and hence $L^{\rm b}$ is referred to as \emph{bandwidth expansion factor}. The bandwidth expansion factor $L^{\rm b}$ serves as a tradeoff between latency and interference mitigation. Using TH, each $i$th antenna modulates the corresponding $m$th entry of vector $\mv x_l$ in \eqref{spike} using random time shifts across the $L^{\rm b}$ chips of the $l$th time period. This introduces temporal separation to reduce interference. 

\subsubsection{WUS Transmission} To elaborate, the antenna at the wake-up Tx modulates the OOK-based WUS using IR at each time step $l\in[\hat{l}^{\rm start}, \hat{l}^{\rm start} + L^{\rm w}-1]$. The OOK-based WUS $s_{\rm w}(t)$ is defined as \cite{win1998impulse}
\begin{align}
    s_{\rm w}(t)= \sum_{j=\hat{l}^{\rm start}}^{\hat{l}^{\rm start}+L^{\rm w}-1} x^{\rm w}_j \phi(t-jL^{\rm b}T_c), \label{twus}
\end{align}
where $x^{\rm w}_j$ represents the $j$th OOK symbol in the set $\{0,1\}$, and $\phi(t)$ denotes the OOK pulse waveform with bandwidth $1/T_c$. The WUS $s_{\rm w}(t)$ is received over a multi-path fading channel impulse response $h_w(t)$ by the wake-up Rx as 
\begin{align}
    w(t)= s_{\rm w}(t) * h_w(t) + z(t), \label{wus}
\end{align}
where $*$ denotes the convolutional operation and $z(t)$ is the white Gaussian noise with noise power $N_0$. 

\subsubsection{Pilot Transmission} As shown in Fig.~\ref{time}(a), following a pre-introduced delay of $L^{\rm d}$ after the WUS transmission, the IR transmitter is activated. To facilitate the main receiver's adaptation to the frequency-selective channel conditions, the IR transmitter transmits pilots prior to the data transmission. The pilot symbols sent from the $i$th antenna have a length of $L^{\rm p}$ and are defined as
\begin{align}
    s^{\rm p}_i(t)= \sum_{j=\hat{l}^{\rm start}+L^{\rm w}+L^{\rm d}}^{\hat{l}^{\rm start}+L^{\rm w}+L^{\rm d}+L^{\rm p}-1} \phi(t-jL^{\rm b}T_c- c^{\rm p}_{j,i}T_c), \label{pilot}
\end{align}
where $c^{\rm p}_{j,i}\in\{0,1,\ldots,L^{\rm b}-1\}$ is an integer for the $j$th pilot symbol transmitted from the $i$th antenna, representing the TH position within $L^{\rm b}$ chips. The pilot is transmitted over the multi-path fading channel impulse response $h_{i,n}(t)$, and is received at the $n$th receive antenna as
\begin{align}
    v^{\rm p}_n(t)= \sum_{i=1}^{N^{\rm T}} s_i^{\rm p}(t) * h_{i,n}(t) + z_n(t), \label{rpilot}
\end{align}
where $z_n(t)$ represents the white Gaussian noise at the $n$th receive antenna.

\subsubsection{Data Transmission} Data transmission commences once all pilot symbols have been transmitted. Each $i$th antenna at the IR transmitter modulates entry $x_{l,i}$ of the vector $\mv x_l=(x_{l,1}, \ldots, x_{l,N^{\rm T}})^T$ in \eqref{spike} at time $l\in[\hat{l}^{\rm start} + L^{\rm w} + L^{\rm d}+ L^{\rm p},  \ldots, L^{\rm max}]$, into a continuous-time signal $s_{i}(t)$, e.g., using Gaussian monopulses, and TH as
\begin{align}
    s_i(t)= \sum_{j=\hat{l}^{\rm start}+L^{\rm w}+L^{\rm d}+L^{\rm p}}^{L^{\rm max}} x_{j,i} \cdot \phi(t-jL^{\rm b}T_c- c_{j,i}T_c),
\end{align}
where $c_{j,i}$ is a random integer between $0$ and $L^{\rm b}-1$, representing TH position for the $i$th antenna at the $j$th time step.

The modulated signal $s_{i}(t)$ is then transmitted over the multi-path fading channel impulse response $h_{i,n}(t)$ to the Rx, where the received signal at the $n$th receive antenna is obtained as the superposition
\begin{align}
    v_n(t)= \sum_{i=1}^{N^{\rm T}}s_{i}(t) * h_{i,n}(t) + z_n(t). \label{transmission}
\end{align}
Note that this assume the delay $L^{\rm d}$ to be longer than the channel spread to avoid interference with the WUS.

\section{Neuromorphic Receiver Processing with a Wake-Up Radio} \label{section 3}
To save energy at the Rx, instead of keeping the main radio on continuously, the proposed system incorporates an ultra low-power wake-up Rx that monitors the ambient radio frequency (RF) environment and listens for the WUS via the received signal \eqref{wus}. This approach allows the Rx to remain in a low-power state for extended periods, activating the main radio only when a WUS is detected. In this section, we start by introducing the WUS detection process operated by the wake-up Rx, and then we describe how the main Rx operates after it has been activated. Finally, we mathematically formulate the design problem of interest, which consists of minimizing the main Rx power consumption and the informativeness of the inference while guaranteeing the desired level of reliability for the decision made at the Rx.

\subsection{WUS detection}
The wake-up Rx is always on, and it applies a correlator to detect the WUS $s_{\rm w}(t)$ in \eqref{twus} from the received signal $w(t)$ in \eqref{wus} \cite{7990121}. This is done via matched filtering, i.e., by evaluating the convolution between $w(t)$ and the complex conjugate of the WUS $s^*_{\rm w}(t)$ as
\begin{align}
    d(\tau) = \int_{-\infty}^{+\infty} w(t)s^*_{\rm w}(t-\tau) dt, \label{match}
\end{align}
and by detecting the WUS at time $\tau$ if the absolute value of the matched filter output $d(\tau)$ in \eqref{match} is larger than some threshold $\lambda^w$, i.e.,
\begin{align}
    \hat{l}^{\rm det}= \min_{l\in [1,\ldots, L^{\rm max}]} \{|d(lL^{\rm b}T_c)| \geq \lambda^{\rm w}\}, \label{lambdaw}
\end{align}
with threshold $\lambda^{\rm w}$ being subject to optimization. As a result, the wake-up time of the main Rx is given by $\hat{l}^{\rm det} + \delta^{\rm wake}$,
where $\delta^{\rm wake} \leq L^{\rm d}$ denotes the time required by the main Rx to be turned on upon the reception of WUS.

The main Rx does not miss the start of the data packet (see Fig.~\ref{time}(b)) as long as we have the inequality
\begin{align}
    \hat{l}^{\rm det} + \delta^{\rm wake} \leq \hat{l}^{\rm start} + L^{\rm w} + L^{\rm d}. \label{Ld}
\end{align}
Otherwise, the wake-up Rx misses at least some of the transmitted samples (Fig.~\ref{time}(c)).

\subsection{Main Radio Processing}
The main radio is equipped with $N^{\rm R}$ antennas, and it stays idle until time $\hat{l}^{\rm det}+ \delta^{\rm wake}$. Upon waking up, the main receiver samples the received pilot signals $\{v^{\rm p}_n(t)\}_{n=1}^{N^{\rm R}}$ and the received data signals $\{v_n(t)\}_{n=1}^{N^{\rm R}}$ at each time $l$, obtaining discrete-time pilots $\mv v^{\rm p}_l =[\mv v^{\rm p}_{l,1},\ldots, \mv v^{\rm p}_{l,N^{\rm R}}]$ and discrete-time data $\mv v_l =[\mv v_{l,1},\ldots, \mv v_{l,N^{\rm R}}]$, respectively. Here, the $n$th element represents the collection of signals by the $n$th antenna for $L^{\rm b}$ chips at time $l$, i.e., $\mv v^{\rm p}_{l,n}=\{v^{\rm p}_{n}(jT_c)\}_{j\in \mathcal{I}_l}$ and $\mv v_{l,n}=\{v_{n}(jT_c)\}_{j\in \mathcal{I}_l}$, where $\mathcal{I}_l=\{(l-1)L_b+1,\ldots, lL_b\}$.

\subsubsection{Pilot Processing Via Hypernetwork} 
A hypernetwork is a type of neural network that generates the weights for another neural network, which can enhance the adaptability of the other neural network to the channel conditions \cite{10016643}. The target network in our setting is the decoding NPU. 

Provided that the main radio has woken up in time, we assume knowledge of the time of arrival of the pilots. Accordingly, we begin by collecting all the received pilot symbols as $\mv v^{\rm p}=\{\mv v_l^{\rm p}\}_{l=\hat{l}^{\rm start}+L^{\rm w}+L^{\rm d}}^{\hat{l}^{\rm start}+L^{\rm w}+L^{\rm d}+L^{\rm p}-1}$. To process the received pilot, we implement a pre-trained hypernetwork parameterized by $\mv \psi$, such as a deep neural network (DNN). This hypernetwork takes the pilot $\mv v^{\rm p}$ as input, and produces a vector $\mv \omega$ as
\begin{align}
    \mv \omega=f_{\scalebox{0.7}{$\mv \psi$}}(\mv v^{\rm p}),
\end{align} 
in which each element is a scaling factor for each neuron in the decoding NPU. Effectively, the hypernetwork subsumes the task of channel estimation by directly mapping pilots to receiver's parameters.

Specifically, the vector $\mv \omega$ is composed of $N_d$ sub-vectors as $\mv \omega =\{\mv \omega_1, \ldots, \mv \omega_{N_d}\}$, where $N_d$ is also the number of layers in the decoding NPU. Each element $\mv \omega_s$ has a length equal to the number of neurons in layer $s$ of the decoding NPU. Thus, the weight matrix $\tilde{\mv \theta}_s^d$ for layer $s$ in the decoding NPU can be adjusted by the hypernetwork as
\begin{align}
    \mv \theta_s^d= \tilde{\mv \theta}_s^d \cdot \text{diag}\{\mv \omega_s\},
\end{align}
where $\text{diag}\{\mv \omega_s\}$ is a diagonal matrix with main diagonal given by the vector $\mv \omega_s$. We collect the updated weights of the decoding NPU as $\mv \theta^d=\{\mv \theta^d_1, \ldots, \mv \theta^d_{N_d} \}$.

\subsubsection{Information Decoding} The data signal $\mv v_l$ is fed to the NPU, which produces a $C \times 1$ vector 
\begin{align}
    \mv r_{l}=f_{\scalebox{0.7}{$\mv \theta^d$}}(\mv v_{l}) 
\end{align} 
via $C$ readout neurons. At the final time $L^{\rm max}$, the output of the decoding NPU is first processed to yield a decision variable. As a typical example, the $C \times 1$ spike count vector $\bar{\mv r}$  is obtained by first summing up all output signal $\{\mv r_l\}_{l=\hat{l}^{\rm det}+\delta^{\rm wake}}^{L^{\rm max}}$ from the $C$ readout neurons as 
\begin{align}
    \bar{\mv r}=\sum_{l^{\prime}=\hat{l}^{\rm det}+\delta^{\rm wake}}^{L^{\rm max}} \mv r_{l^{\prime}}. \label{count}
\end{align}

Focusing on a classification problem, the decoding NPU applies softmax function to the spike count vector $\bar{\mv r}$ to obtain a probability vector $\mv p=[p_1,\ldots,p_C]$. A score is assigned to each class $c$ using the log-loss as $s_c=-\log(p_c)$. The final decision is constructed in the form of a \emph{decision set} that includes the classes whose scores are smaller than a given threshold $\lambda^{\rm d}$, i.e., \cite{chen2023spikecp}
\begin{align}
    \mathcal{C} = \{c: s_c \leq \lambda^{\rm d}\}. \label{set}
\end{align}
The use of a decision set supports reliable decision making, whereby the size of the decision set $\mathcal{C}$ can be determined as a function of the uncertainty of the decision \cite{vovk2022algorithmic, angelopoulos2021learn}. This way, in contrast to standard methods such as top-$k$ prediction, the size $|\mathcal{C}|$ of the set is adapted to the difficulty of the input, providing a means to control the expected loss and to quantify the uncertainty.

\subsection{Design Problem} \label{designp}
Overall, the decision vector $\mv r$ in \eqref{count} produced by the decoding NPU at the receiver depends on the fading channels and noise experienced by WUS transmission as per \eqref{wus} and by data transmission as per \eqref{transmission}. We denote collectively all noise and channel variables as $\mathbf{h}$. While the variables in vector $\mathbf{h}$ cannot be controlled, the system can tune the hyperparameters $\mv \lambda=[\lambda^{\rm s}, \lambda^{\rm w}, \lambda^{\rm d}]$, dictating the threshold $\lambda^{\rm s}$ for input signal detection at the Tx as in \eqref{wakeup}; the threshold $\lambda^{\rm w}$ for WUS detection at the wake-up Rx as in \eqref{lambdaw}; and the threshold $\lambda^{\rm d}$ for prediction \eqref{set}. 

As the predicted set $\mathcal{C}$ in \eqref{set} depends on the input data $\mv u$, the channel variables $\mathbf{h}$, and the hyperparameter vector $\mv \lambda$, we will explicitly denote it as $\mathcal{C}(\mv u, \mathbf{h}, \mv \lambda)$. To define the problem of optimizing the hyperparameters $\lambda$, we introduce a \emph{loss function} $\ell(c, \mathcal{C}(\mv u, \mathbf{h}, \mv \lambda))$ capturing the discrepancy between the true target variable $c$ and the estimate $\mathcal{C}(\mv u, \mathbf{h}, \mv \lambda)$. The corresponding \emph{expected loss} is defined as
\begin{align} 
    L(\mv \lambda)=\mathbb{E}[\ell(c, \mathcal{C}(\mv u, \mathbf{h}, \mv \lambda))], \label{risk}
\end{align}
where the expectation is taken with respect to the data distribution $p(\mv u, c)$ of the input-output pair $(\mv u, c)$, as well as over the distribution $p(\mathbf{h})$ of the channel variables $\mathbf{h}$. 

Given pre-trained encoding and decoding NPUs, we wish to find hyperparameters $\mv \lambda$ that minimize the average energy consumption $E(\mv \lambda)$ at the Rx main radio and the size of the predicted set $\mathcal{C}(\mv u, \mathbf{h}, \mv \lambda)$, while controlling the expected loss $L(\mv \lambda)$ at some predetermined level $\alpha \in[0,1]$. Note that the focus on energy consumption of the main radio at the Rx is justified by the fact that it is typically the most significant contributor to the overall energy expenditure at the Rx \cite{jouni20221}.

The \emph{average energy} $E(\mv \lambda)$ consumed by the Rx main radio is evaluated as
\begin{align}
    E(\mv \lambda) = P^{\rm on}(L^{\rm max}-\mathbb{E}[\hat{l}^{\rm det}(\mv u, \mathbf{h}, \mv \lambda)]- \delta^{\rm wake}+1), \label{energy}
\end{align}
with $P^{\rm on}$ being the per-time-step energy consumed by the main radio when it is on, and the expectation is computed with respect to the data distribution of the input $\mv u$ and the distribution of vector $\mathbf{h}$. In fact, as illustrated in Fig.~\ref{time}, the Rx main radio is on for $L^{\rm max} -\hat{l}^{\rm det}(\mv u, \mathbf{h}, \mv \lambda)- \delta^{\rm wake}+1$. The notation $\hat{l}^{\rm det}(\mv u, \mathbf{h}, \mv \lambda)$ is introduced in \eqref{energy} to highlight the dependence of the detection time $\hat{l}^{\rm det}$ on input $\mv u$, channel $\mathbf{h}$, and hyperparameter $\mv \lambda$.

A smaller energy consumption \eqref{energy}  can be obtained by waking up the main radio later, i.e., by maximizing the expected value $\mathbb{E}[\hat{l}^{\rm det}(\mv u, \mathbf{h}, \mv \lambda)]$, but this generally comes at the cost of an increased average loss $L(\mv \lambda)$. To assess the informativeness of the predicted set $\mathcal{C}(\mv u, \mathbf{h}, \mv \lambda)$, we evaluate the \emph{average set size} as 
\begin{align}
    I(\mv \lambda) = \mathbb{E}[|\mathcal{C}(\mv u, \mathbf{h}, \mv \lambda)|],
\end{align}
where the expectation is taken with respect to the data distribution of the input $\mv u$ and the distribution of vector $\mathbf{h}$.

Overall, the design problem of interest is formulated as the constrained minimization
\begin{align} 
    &~ \underset{\mv \lambda}{\text{minimize}} ~ E(\mv \lambda) + \gamma I(\mv \lambda) \notag \\
    &~ \text{subject to}~  L(\mv \lambda) \leq \alpha,   \label{problem}
\end{align}
where $\gamma \geq 0$ is a weight factor determining the relative priority between the energy consumption $E(\mv \lambda)$ and the set size $I(\mv \lambda)$, while the parameter $\alpha >0$ specifies the desired reliability level, with a smaller $\alpha$ indicating a stricter reliability requirement. Regarding the choice of parameter $\gamma$ in \eqref{problem}, note that there is generally a tension between energy $E(\mv \lambda)$,  and set size $I(\mv \lambda)$. In fact, reducing the set size $I(\mv \lambda)$, while maintaining the desired target reliability $\alpha$, generally requires a larger energy expenditure $E(\mv \lambda)$.

\section{DT-LTT: Hyperparameters Optimization For Energy-Efficient Risk Control} \label{section 4}
As discussed in the last section, the goal of this work is to introduce a methodology for the selection of hyperparameters $\mv \lambda$ by addressing problem \eqref{problem}. In this section, we describe the proposed solution based on digital twinning and LTT \cite{angelopoulos2021learn}, a method recently introduced in statistics.

\subsection{Digital Twin-based Optimization}
Addressing problem \eqref{problem} is made complicated by the fact that we do not assume knowledge of the distribution $p(\mv u, c)$ of each data pair $(\mv u, c)$, consisting of sensed signal $\mv u$ and label $c$, and we also do not have access to the distribution $p(\mathbf{h})$ of the channel variables $\mathbf{h}$. To obtain information about the data distribution $p(\mv u, c)$, we make the common assumption that a dataset $\mathcal{D}=\{(\mv u_n, c_n)\}_{n=1}^{|\mathcal{D}|}$ is available, where each pair $(\mv u_n, c_n)$ of signal $\mv u_n$ and label $c_n$ is generated in an independent and identically distributed (i.i.d.) manner from the distribution $p(\mv u, c)$. Note that each pair is thus produced under an independent channel realization 
from distribution $p(\mathbf{h})$. Furthermore, to facilitate the collection of information about the distribution $p(\mathbf{h})$ of the channel variables, we assume access to a simulator in a \emph{digital twin} of the system. As illustrated in Fig.~\ref{dtpt}, the simulator can produce samples $\Tilde{\mathbf{h}}$ from a distribution $\Tilde{p}(\mathbf{h})$ that is generally different from the true distribution $p(\mathbf{h})$. The \emph{fidelity} of the simulator depends on how similar the distribution $p(\mathbf{h})$ and $\Tilde{p}(\mathbf{h})$ are.

With this information, DT-LTT aims at solving a relaxation of problem \eqref{problem}, in which the constraint is required to be satisfied with a user-determined probability $1-\delta$ with $\delta \in(0,1)$. The resulting problem is defined as 
\begin{align}
    &~ \underset{\lambda}{\text{minimize}} ~ E(\mv \lambda) + \gamma I(\mv \lambda) \notag \\
    &~ \text{subject to}~  \Pr\big[L(\mv \lambda) \leq \alpha\big] \geq 1-\delta,   \label{eq:goal}
\end{align}
where the probability $\Pr[\cdot]$ is taken with respect to the random realization of the dataset $\mathcal{D}$ and the channel $\mathbf{h}$. Note that the probability in \eqref{eq:goal} cannot be evaluated given that the distribution $p(\mv u, c)$ and $p(\mathbf{h})$ are unknown.

\subsection{Digital Twin-Based Pre-Selection of Candidate Solutions}

In order to address problem \eqref{eq:goal}, we follow a two-stage approach illustrated in Fig.~\ref{dtpt}. In the first phase, the digital twin pre-selects a subset $\Lambda$ of candidate hyperparameter vectors $\mv \lambda$. The pre-selected candidates in set $\Lambda$ are then tested in the following phase of \emph{on-air calibration} to identify a hyperparameter vector $\mv \lambda^*$ that provably satisfies the constraint in \eqref{eq:goal}. Reducing the size of the candidate solutions via the use of the digital twin supports a more efficient use of the physical channel resources during on-air calibration, as fewer options need to be evaluated using transmission on the wireless channel.

At a technical level, as detailed in the Appendix, the proposed approach leverages the freedom in the LTT scheme reviewd in Section II to choose  any fixed sequence of hyperparameter vectors for testing of the reliability condition (\ref{eq:goal}). Our proposed method, DT-LTT, determines the sequence of hyperparameter vectors by leveraging a digital twin model.

To start, the dataset $\mathcal{D}$ is randomly partitioned into two subsets, namely the dataset $\mathcal{D}^{\rm DT}$ to be used with the simulator produced by the digital twin and the dataset $\mathcal{D}^{\rm PT}$ to be leveraged for on-air calibration in the physical system. To carry out the pre-selection of a subset $\Lambda$ of hyperparameter, the digital twin addresses the \emph{multi-objective problem} 
\begin{align}
    \underset{\mv \lambda}{\text{minimize}}~ \{ \hat{L}^{\rm DT}(\mv \lambda), \hat{E}^{\rm DT}(\mv \lambda) + \gamma \hat{I}^{\rm DT}(\mv \lambda)\}, \label{mutio}
\end{align}
where the objectives $\hat{L}^{\rm DT}(\mv \lambda)$, $\hat{E}^{\rm DT}(\mv \lambda)$ and $\hat{I}^{\rm DT}(\mv \lambda)$ are empirical estimates obtained at the digital twin for the expected loss \cite{simeone2022machine}
\begin{align}
     \hat{L}^{\rm DT}(\mv \lambda)=\frac{1}{|\mathcal{D}^{\rm DT}|} \sum_{n=1}^{|\mathcal{D}^{\rm DT}|} \ell\big(c, \mathcal{C}(\mv u_n, \Tilde{\mathbf{h}}_n, \mv \lambda)\big), \label{erisk}
\end{align}
the average energy consumption
\begin{align}
    \hat{E}^{\rm DT}(\mv \lambda)= P^{\rm on}\bigg(L^{\rm max}-\frac{1}{|\mathcal{D}^{\rm DT}|} \sum_{n=1}^{|\mathcal{D}^{\rm DT}|} \hat{l}^{\rm det}(\mv u_n, \Tilde{\mathbf{h}}_n, \mv \lambda) - \delta^{\rm wake} +1\bigg), \label{epower}
\end{align}
and the average set size 
\begin{align}
    \hat{I}^{\rm DT}(\mv \lambda) = \frac{1}{|\mathcal{D}^{\rm DT}|} \sum_{n=1}^{|\mathcal{D}^{\rm DT}|}|\mathcal{C}(\mv u_n, \Tilde{\mathbf{h}}_n, \mv \lambda)|. \label{esize}
\end{align}

The empirical estimates \eqref{erisk}, \eqref{epower} and \eqref{esize} are obtained by using the dataset $\mathcal{D^{\rm DT}}$ and transmission simulated using channels $\Tilde{\mathbf{h}}_n \sim \Tilde{p}(\mathbf{h})$ generated by digital twin. As shown in Fig.~\ref{dt}, the digital twin uses an arbitrary multi-objective optimization algorithm to identify a discrete subset $\Lambda$ of values of the hyperparameter $\mv \lambda$ such that the resulting estimates $\big(\hat{L}^{\rm DT}(\mv \lambda), \hat{E}^{\rm DT}(\mv \lambda)+ \gamma \hat{I}^{\rm DT}(\mv \lambda)\big)$ lie on the Pareto front of the set of achievable values for the pair $\big(\hat{L}^{\rm DT}(\mv \lambda), \hat{E}^{\rm DT}(\mv \lambda)+ \gamma \hat{I}^{\rm DT}(\mv \lambda)\big)$. Mathematically, each vector $\mv \lambda$ included in the candidate set $\Lambda$ satisfies the condition
\begin{align}
    \nexists \mv \lambda^{\prime} & ~\text{such that}~ \hat{L}^{\rm DT}(\mv \lambda^{\prime}) < \hat{L}^{\rm DT}(\mv \lambda) ~\text{and}~ \notag \\
    &\hat{E}^{\rm DT}(\mv \lambda^{\prime})+ \gamma \hat{I}^{\rm DT}(\mv \lambda^{\prime}) < \hat{E}^{\rm DT}(\mv \lambda)+ \gamma \hat{I}^{\rm DT}(\mv \lambda)
\end{align}
that no other hyperparameter $\mv \lambda^{\prime}$ improves both empirical loss and empirical energy consumption plus the weighted set size.

\begin{figure*}[htp]
	\centering
	\includegraphics[width=5.3in]{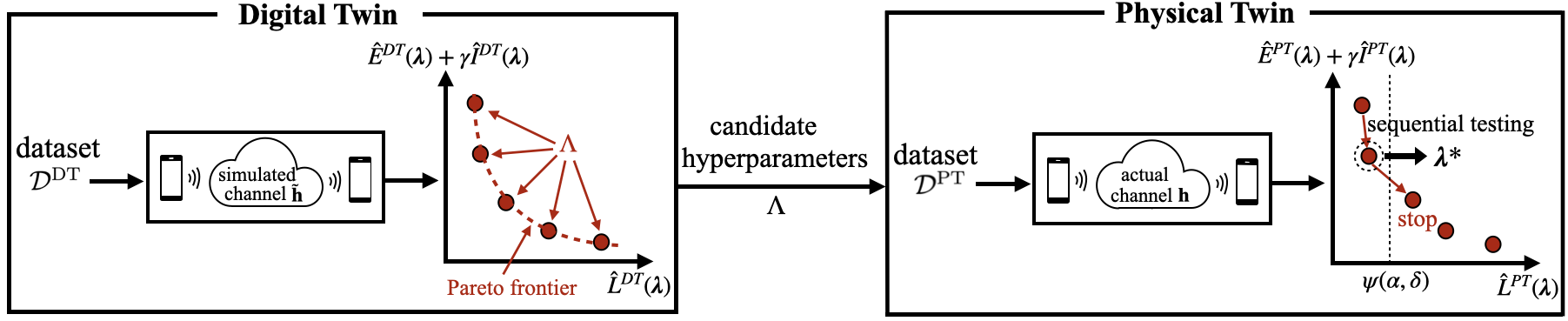}
	\caption{Illustration of the proposed DT-LTT design strategy: During the first phase of pre-selection, the digital twin determines a subset of candidate hyperparameters that yield estimated $\hat{E}^{\rm DT}(\mv \lambda)+\gamma \hat{I}^{\rm DT}(\mv \lambda)$ and loss $\hat{L}^{\rm DT}(\mv \lambda)$ on the Pareto frontier. Then, during on-air calibration, the physical twin transmits on the actual channel to test the candidates in set $\Lambda$ sequentially, stopping when the estimated loss crosses a threshold $\psi(\alpha, \delta)$. The solution $\mv \lambda^*$ is then obtained by choosing the value of $\mv \lambda$ that yields the minimum estimated objective $\hat{E}^{\rm PT}(\mv \lambda)+\gamma \hat{I}^{\rm PT}(\mv \lambda)$, while guaranteeing the inequality $\hat{L}^{\rm PT}(\mv \lambda)< \psi(\alpha, \delta)$.  }
	\label{dt}
\end{figure*} 

\subsection{On-Air Calibration} \label{onairc}
Given the pre-selected candidate solutions in set $\Lambda$, on-air calibration aims at selecting a value $\mv \lambda$ that approximately solves problem \eqref{mutio}, ensuring the validity of the reliability constraint in \eqref{eq:goal}. To this end, the solutions in set $\Lambda$ are first ordered with respect to the loss value $\hat{L}^{\rm DT}(\mv \lambda)$ in \eqref{erisk} as
\begin{align}
    \hat{L}^{\rm DT}(\mv \lambda_1) \leq \hat{L}^{\rm DT}(\mv \lambda_2) \leq \ldots \leq \hat{L}^{\rm DT}(\mv \lambda_{|\Lambda|}).
\end{align}
On-air calibration evaluates the solutions in set $\Lambda$ in the order $\mv \lambda_1, \mv \lambda_2, \ldots$, selecting a value $\mv \lambda^{*}$ that is guaranteed to satisfy constraint \eqref{eq:goal}, while reducing as much as possible the weighted sum of energy consumption and set size.

For any hyperparameter $\mv \lambda_j$ being tested, using transmission on the actual physical channel, the physical twin evaluates empirical expected loss
\begin{align}
     \hat{L}^{\rm PT}(\mv \lambda_j)=\frac{1}{|\mathcal{D}^{\rm PT}|} \sum_{n=1}^{|\mathcal{D}^{\rm PT}|} \ell(c, \mathcal{C}(\mv u_n, \mathbf{h}_n, \mv \lambda_j)), \label{crisk}
\end{align}
the empirical energy consumption 
\begin{align}
    \hat{E}^{\rm PT}(\mv \lambda_j)=&P^{\rm on}\bigg(L^{\rm max}-\frac{1}{|\mathcal{D}^{\rm PT}|} \sum_{n=1}^{|\mathcal{D}^{\rm PT}|} \hat{l}^{\rm det}(\mv u_n, \mathbf{h}_n, \mv \lambda_j) \notag \\
    &- \delta^{\rm wake} +1\bigg) \label{cpower}
\end{align}
and the empirical set size 
\begin{align}
    \hat{I}^{\rm PT}(\mv \lambda_j) = \frac{1}{|\mathcal{D}^{\rm PT}|} \sum_{n=1}^{|\mathcal{D}^{\rm PT}|}|\mathcal{C}(\mv u_n, \mathbf{h}_n, \mv \lambda_j)| \label{csize}
\end{align}
by transmitting on actual channel realizations $\mathbf{h}_n \sim p(\mathbf{h})$. Note that the channel realization  $\mathbf{h}_n$ is not known and not required to evaluate the estimates \eqref{crisk}, \eqref{cpower} and \eqref{csize}. 
The estimates  \eqref{crisk}, \eqref{cpower} and \eqref{csize} are evaluated successively for the candidate solutions $\mv \lambda_1,\mv \lambda_2, \ldots$, until a stopping criterion is satisfied.

Specifically, as illustrated in Fig.~\ref{dt}, the evaluation of candidate solutions $\mv \lambda_1, \mv \lambda_2,\ldots$ stops at the first value $j^{\rm stop}$ for which the estimated loss $\hat{L}^{\rm PT}(\mv \lambda_{j^{\rm stop}})$ in \eqref{crisk} exceeds the threshold 
\begin{align}
    \psi(\alpha, \delta) = \alpha - \sqrt{\frac{- \ln (\delta)}{2|\mathcal{D}^{\rm PT}|}}, \label{threshold}
\end{align}
which is a function of the dataset size $|\mathcal{D}^{\rm PT}|$, of the target expected loss $\alpha$ in \eqref{eq:goal}, and of the probability bound $\delta$ in \eqref{eq:goal}. For the optimal hyperparameter $\mv \lambda^*$ to be well defined, one needs to ensure the condition
\begin{align}
    \hat{L}^{\rm PT}(\mv \lambda_1) < \psi(\alpha, \delta).  \label{feasible}
\end{align} 
If condition \eqref{feasible} is not met, the decoding NPU makes a \emph{secure} decision by including all classes in the predicted set $\mathcal{C}$ in \eqref{set}, while saving energy by keeping the main receiver off. This amounts to the choice $\mv \lambda^*=[\lambda^{\rm s}=\infty, \lambda^{\rm w}= \infty, \lambda^{\rm d}=\infty]^T$.

\begin{algorithm}[t]
  \caption{Digital Twin-Based Learn-then-Test (DT-LTT) Calibration}\label{rca}
  \begin{algorithmic}[1]
    \STATE {\textbf{Initialization:} Dataset $\mathcal{D}^{\rm DT}$, dataset $\mathcal{D}^{\rm PT}$,  risk tolerance $\alpha\in[0,1]$, and error level $\delta \in[0,1]$} \\
    \underline{\emph{Digital Twin-based Pre-selection of Candidate Solutions}}:\\
    \STATE Using the simulated channel $\Tilde{\mathbf{h}} \sim p(\Tilde{\mathbf{h}})$, identify a subset $\Lambda$ of the candidate solutions $\mv \lambda$ such that each $\mv \lambda\in\Lambda$ returns estimates $\big(\hat{L}^{\rm DT}(\mv \lambda), \hat{E}^{\rm DT}(\mv \lambda)+ \gamma \hat{I}^{\rm DT}(\mv \lambda) \big)$ in \eqref{erisk}, \eqref{epower} and \eqref{esize} on the Pareto frontier.    \\
    \underline{\emph{On-Air Calibration}}:\\
    \STATE Order the solutions in set $\Lambda$ as $\hat{L}^{\rm DT}(\mv \lambda_1) \leq \hat{L}^{\rm DT}(\mv \lambda_2) \leq \ldots \leq \hat{L}^{\rm DT}(\mv \lambda_{|\Lambda|})$. \\
    \FOR{$j=1, 2,\ldots, |\Lambda|$}
        \STATE Estimate expected loss $\hat{L}^{\rm PT}(\mv \lambda_j)$, energy consumption $\hat{E}^{\rm PT}(\mv \lambda_j)$ and set size $\hat{I}^{\rm PT}(\mv \lambda_j)$ in \eqref{crisk}, \eqref{cpower} and \eqref{csize} using the actual channel $\mathbf{h}\sim p(\mathbf{h})$. \\
        \STATE \textbf{if}~ $j=1$~\text{and}~$\hat{L}^{\rm PT}(\mv \lambda_j) >\psi(\alpha, \delta)$
        \STATE ~~~~\text{Set} $\mv \lambda^*=[\lambda^{\rm s}=\infty, \lambda^{\rm w}= \infty, \lambda^{\rm d}=\infty]^T$ (secure solution).
        \STATE \textbf{else if}~ $j>1$~\text{and}~$\hat{L}^{\rm PT}(\mv \lambda_j) >\psi(\alpha, \delta)$
        \STATE~~~~\text{Set} $\mv \lambda^*$ using \eqref{riskltt}.
        \STATE \textbf{end if}
    \ENDFOR
  \end{algorithmic}
\end{algorithm}

Assuming that such value exists, finally, the selected value $\mv \lambda^{*}$ is obtained by choosing the value $\mv \lambda_j$ with $j\in\{1,\ldots, j^{\rm stop}\}$ that returns the smallest estimated sum $\hat{E}^{\rm PT}(\mv \lambda_j)+\gamma \hat{I}^{\rm PT}(\mv \lambda_j)$, i.e.,
\begin{align}
\mv \lambda^{*}= \mv \lambda_{j^{*}}, ~\text{with}~ j^*=\argmin_{j\in\{1,\ldots, j^{\rm stop}\}} \{\hat{E}^{\rm PT}(\mv \lambda_j) +\gamma \hat{I}^{\rm PT}(\mv \lambda_j)\}.
\label{riskltt}
\end{align} 

The overall proposed calibration procedure is described in Algorithm \ref{rca}. As proved next, by the properties of LTT \cite{angelopoulos2021learn}, DT-LTT  guarantees the constraint \eqref{eq:goal} irrespective of the true, unknown, distributions $p(\mv u, c)$ and $p(\mathbf{h})$, and irrespective of the fidelity of the digital twin.

\begin{theorem}[\textbf{Reliability of DT-LTT}]
\label{theor}
By setting the hyperparameter vector $\mv \lambda^*$ as in Algorithm \ref{rca}, DT-LTT satisfies the inequality 
\begin{align}
    \Pr[L(\mv \lambda^*) \leq \alpha] \geq 1-\delta \label{theo}
\end{align}
holds for any realizations of dataset $\mathcal{D}^{\rm DT}$, simulated channels $\{\Tilde{\mathbf{h}}_n \sim \Tilde{p}(\mathbf{h})\}_{n=1}^{|\mathcal{D}^{\rm DT}|}$, with probability in \eqref{theo} evaluated with respect to the randomness of the dataset $\mathcal{D}^{\rm PT}$ and the true channels  $\{\mathbf{h}_n \sim p(\mathbf{h})\}_{n=1}^{|\mathcal{D}^{\rm PT}|}$.
\end{theorem}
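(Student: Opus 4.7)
The plan is to reduce the claim to the standard fixed-sequence testing guarantee underlying Learn Then Test, observing that the digital twin's role is confined to choosing \emph{which} hyperparameters are tested and in \emph{what} order, while statistical validity is provided entirely by the independent physical-twin data. Since $\mathcal{D}^{\rm DT}$ and $\mathcal{D}^{\rm PT}$ are disjoint i.i.d.\ draws from $p(\mv u, c)$ and the simulated channels $\{\Tilde{\mathbf{h}}_n\}$ are generated independently of the physical channels $\{\mathbf{h}_n\}$, I would first condition on $\mathcal{D}^{\rm DT}$ together with the simulated channels: this renders both the candidate set $\Lambda$ and the ordering $\lambda_1,\ldots,\lambda_{|\Lambda|}$ deterministic, while the physical-twin quantities $\hat{L}^{\rm PT}(\lambda_j)$, $\hat{E}^{\rm PT}(\lambda_j)$, $\hat{I}^{\rm PT}(\lambda_j)$ remain empirical averages over fresh i.i.d.\ samples.

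Next I would construct a per-hypothesis level-$\delta$ test. For each $j$ define the null $H_j : L(\lambda_j) > \alpha$ and adopt the rejection rule $\hat{L}^{\rm PT}(\lambda_j) \leq \psi(\alpha,\delta)$. Writing $t = \alpha - \psi(\alpha,\delta) = \sqrt{-\ln(\delta)/(2|\mathcal{D}^{\rm PT}|)}$, Hoeffding's inequality applied to the $[0,1]$-bounded per-sample losses gives, under $H_j$,
\begin{align*}
\Pr\bigl[\hat{L}^{\rm PT}(\lambda_j) \leq \psi(\alpha,\delta)\bigr]
&\leq \Pr\bigl[\hat{L}^{\rm PT}(\lambda_j) - L(\lambda_j) \leq -t\bigr] \\
&\leq e^{-2|\mathcal{D}^{\rm PT}| t^2} \;=\; \delta,
\end{align*}
so that each passing hyperparameter constitutes a valid level-$\delta$ rejection of $H_j$.

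To lift this to a family-wise guarantee, I would invoke the classical fixed-sequence argument. Let $j^\circ = \min\{j : L(\lambda_j) > \alpha\}$, with $j^\circ = \infty$ if no such index exists. Since the procedure stops at the first empirical failure and the selector in \eqref{riskltt} returns only indices preceding the stopping point---all of which necessarily rejected their corresponding null---the event $\{L(\lambda^{*}) > \alpha\}$ is contained in $\{\hat{L}^{\rm PT}(\lambda_{j^\circ}) \leq \psi(\alpha,\delta)\}$, whose probability is at most $\delta$ by the previous display; crucially, no Bonferroni-type correction is needed because testing halts the first time the data refuse to reject. The secure branch $\lambda^{*}=[\infty,\infty,\infty]^T$ is handled separately: it forces the decision set $\mathcal{C}$ in \eqref{set} to contain every class, so the standard miscoverage-style loss vanishes and \eqref{theo} holds trivially.

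Finally, because the derived bound holds for every realization of $\mathcal{D}^{\rm DT}$ and $\{\Tilde{\mathbf{h}}_n\}$, the probability in \eqref{theo}, taken over $\mathcal{D}^{\rm PT}$ and $\{\mathbf{h}_n\}$, inherits the $1-\delta$ guarantee irrespective of the fidelity of the twin---the simulator influences only the \emph{power} of the procedure, through the quality and ordering of the candidates in $\Lambda$. The main obstacle I anticipate is the careful bookkeeping around the stopping rule of Algorithm \ref{rca}, in particular verifying that the argmin in \eqref{riskltt} is effectively restricted to indices that have already been rejected, so that the fixed-sequence reduction applies verbatim; a secondary issue is justifying the $[0,1]$ boundedness of the loss required by Hoeffding, although a rescaling covers any bounded loss, and an empirical-Bernstein or Bentkus $p$-value (both compatible with LTT) would take over in the unbounded case without altering the overall logic.
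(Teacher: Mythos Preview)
Your proposal is correct and follows essentially the same route as the paper: both arguments reduce to the LTT fixed-sequence testing guarantee by (i) using Hoeffding's inequality to certify that the threshold rule $\hat{L}^{\rm PT}(\lambda)\le\psi(\alpha,\delta)$ yields a valid level-$\delta$ test of the null $L(\lambda)>\alpha$, (ii) invoking the fixed-sequence/Pareto testing principle so that all hyperparameters tested before the first acceptance are simultaneously valid without multiplicity correction, and (iii) handling the secure fallback $\lambda^*=[\infty,\infty,\infty]^T$ separately. Your write-up is in fact more explicit than the paper's, which largely defers to \cite{angelopoulos2021learn,laufer2022efficiently} for the fixed-sequence step, whereas you spell out the containment $\{L(\lambda^*)>\alpha\}\subseteq\{\hat{L}^{\rm PT}(\lambda_{j^\circ})\le\psi(\alpha,\delta)\}$ directly; the bookkeeping concern you raise about whether the $\arg\min$ in \eqref{riskltt} should range only over indices strictly before $j^{\rm stop}$ is well-placed and is precisely the point at which the paper's appeal to \cite[Algorithm 1]{angelopoulos2021learn} does the work.
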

\begin{proof}
    The proof is provided in the Appendix.
\end{proof}

\section{Experiments} \label{exp}
In this section, we present numerical results that validate the proposed design and analysis.
\subsection{Setting}
To test the proposed DT-LTT calibration method, we consider a neuromorphic wireless communication link over a multi-path fading channel, whose goal is to support reliable image classification at the receiver. The transmitter is equipped with $N^{\rm T}=10$ antennas, each modulating the spiking signal produced by the corresponding neuron of the encoding NPU, while the receiver has $N^{\rm R}=2$ antennas.  All antennas share the same multipath delays, with delay of the $i$th path equal to the $i$th chip time. The signal-to-noise ratio (SNR) per time step is defined as the ratio of the transmission power, which is assumed to be the same for WUS, pilots, and data transmission, over the noise power. We set the SNR to 10 dB.

As in \cite{10016643}, the encoding NPU is a fully-connected SNN featuring one hidden layer comprising 600 neurons and an output layer with 10 neurons, while the decoding NPU is designed as an SNN with a single hidden layer containing 200 neurons and an output layer consisting of 10 neurons, each representing one of the 10 classes. The hypernetwork is implemented as an ANN with two hidden layers, containing 800 and 500 neurons, respectively.

Unless stated otherwise, the maximum observation period for each data $\mv u$ is $L^{\rm max}=60$ time steps, with the duration for the signal of interest fixed at $L^{\rm sig}=40$. During this period, we repetitively present an input image to be classified for 40 time steps. The initial time $l^{\rm start}$ is determined by drawing from a discrete uniform distribution in the set $\{1, L^{\rm max} -L^{\rm sig}\}$. Subsequently, the initial $l^{\rm start}$ and the last $L^{\rm max}-L^{\rm sig}-l^{\rm start}$ time samples of $\mv u$ are generated independently using a Bernoulli distribution. 

To implement the QUSUM algorithm, the irrelevant signals are modelled as Bernoulli i.i.d. samples with probability $p^{\rm noise}$, while relevant signals are also modelled as Bernoulli i.i.d. variables with a spiking probability $p^{\rm sig}$ estimated from the training data.

For IR transmission, the duration of the WUS is set to $L^{\rm w}=2$, and the duration for the pilot is also set to $L^{\rm p}=2$. The delay added by the transmitter is $L^{\rm d}=3$ time steps, and the wake-up time $\delta^{\rm wake}=2$. The power for keeping the main radio on is set to a normalized value $P^{\rm on}=1$. 

Decision are made via set prediction as in \eqref{set}, and the loss function $\ell(c, \mathcal{C})$ is a 0-1 loss that indicates whether the true label $c$ is included in the predicted set $\mathcal{C}$ or not, i.e., $\ell(c, \mathcal{C})=\mathbbm{1}(c \notin \mathcal{C})$, where $\mathbbm{1}(\cdot)$ is an indicator function. Accordingly, the average loss represents the \emph{probability of miscoverage} for the decision set $\mathcal{C}$. To evaluate the \emph{informativeness} of the set prediction, we also compute the normalized average set size $|\mathcal{C}|/C$ of the prediction set \cite{zecchin2024generalization}.

Since our focus is on the optimization of the thresholds, rather than on training, we adopt \emph{pre-trained} SNNs. Pre-training, testing, and calibration use the N-MNIST dataset, a neuromorphic dataset that comprises 60,000 training samples and 10,000 test samples. Each sample in the dataset represents a handwritten digit ranging from 0 to 9, and is presented as a $34\times 34$ pixel image. We partition the training dataset by drawing $6,000$ samples for the dataset $\mathcal{D}^{\rm DT}$ and $6,000$ samples for the dataset $\mathcal{D}^{\rm PT}$, with the remaining data points used for pre-training. Pre-training is done in an end-to-end manner without considering the wake-up radio as in \cite{10016643}. 

\begin{figure*}[htp]
	\centering
	\includegraphics[width=4.7in]{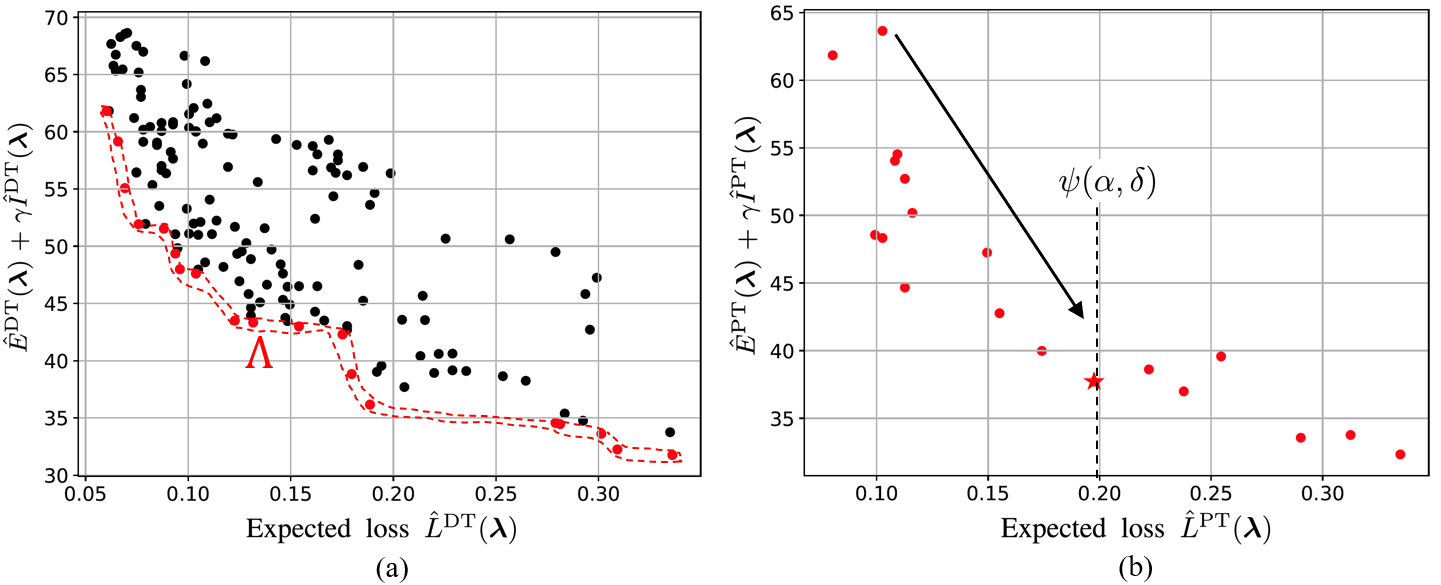}
	\caption{Illustration of the operation of DT-LTT: (a) \emph{Digital twin-based pre-selection}: Expected loss $\hat{L}^{\rm DT}(\mv \lambda)$ versus weighted sum $\hat{E}^{\rm DT}(\mv \lambda)+\gamma \hat{I}^{\rm DT}(\mv \lambda)$ estimated at the digital twin using dataset $\mathcal{D}^{\rm DT}$ and channel simulators via \eqref{erisk}, \eqref{epower} and \eqref{esize}, with each point corresponding to the evaluation of a hyperparameter $\mv \lambda$ in a grid of options. The red points represent the selected candidates, which lie on the Pareto frontier $\Lambda$. (b) \emph{On-air calibration}: Expected loss $\hat{L}^{\rm PT}(\mv \lambda)$ versus weighted sum $\hat{E}^{\rm PT}(\mv \lambda)+\gamma \hat{I}^{\rm PT}(\mv \lambda)$ estimated using actual wireless transmissions with each point representing the evaluation for one of the hyperparameters $\mv \lambda$ in the set $\Lambda$. The star is the hyperparameter selected by on-air calibration with $\alpha=0.2$, $\delta=0.05$, $\gamma=10$ and $L^{\rm max}=60$.}
	\label{paretof}
\end{figure*} 

\subsection{Benchmarks}
For comparison, we consider the following benchmarks. For all the schemes using LTT, the grid contains all threshold tuples $(\lambda^{\rm s}, \lambda^{\rm w}, \lambda^{\rm d})$ with $\lambda^{\rm s}\in \{0,1,\ldots,4\}$, $\lambda^{\rm w}\in\{0.1, 0.2, \ldots, 0.6\}$, and $\lambda^{\rm d}\in\{1,3,\ldots,9\}$.
\begin{itemize}
    \item \emph{Conventional neuromorphic wireless communications:} The conventional system is designed without signal detection and wake-up radio modules, which amounts to setting the corresponding thresholds as  $\lambda^{\rm s}=0$ and $\lambda^{\rm w}=0$. With this conventional setup, the NPUs are continuously on. Furthermore, rather than relying on the proposed adaptive set prediction strategy, in this conventional strategy, the NPU at the receiver side applies top-2 prediction to generate a prediction set, which is constructed by including the top two predicted classes with the highest spike count in the output vector \eqref{count}.
      \item \emph{LTT:} To evaluate the performance of a basic version of the  LTT algorithm, we consider a scheme that implements LTT without the use of digital twinning. This approach follows Algorithm 1, with two caveats: (\emph{i})  the step 1 of pre-selection via a digital twin is not carried out; and (\emph{ii}) the number of on-air calibration transmissions, i.e., the number of iterations of the for cycle in line 4 of Algorithm 1, is limited by the average number of Pareto points in set $\Lambda$ used by the proposed DT-LTT scheme. This way, the use of spectral resources for calibration is not increased as compared to DT-LTT. Note that this modification violates the assumptions in Theorem 1, and thus this scheme may not satisfy the reliability condition  (\ref{theo}). This approach uses a fixed test sequence within the mentioned grid of hyperparameters considering first all option with the highest threshold, and then exploring other options decreasing first $\lambda^{\rm s}\in \{0,1,\ldots,4\}$, then $\lambda^{\rm w}\in\{0.1, 0.3\}$, and finally $\lambda^{\rm d}\in\{1,5,9\}$.
    \item \emph{DT-LTT with an always-on main radio:} We also consider an \emph{always-on} variant of DT-LTT, which keeps the main receiver radio on for all time instants. In this case, the hyperparameter vector $\mv \lambda$ to be optimized contains only the threshold $\lambda^{\rm s}$ for signal detection and the threshold $\lambda^{\rm d}$ for set prediction. As for LTT, we limit the number of on-air calibration rounds to be at most equal to the number of Pareto points in set $\Lambda$ of DT-LTT.  Furthermore, we set  $\lambda^{\rm w}=0$.  Note that, for this strategy, the resulting calibration output does not depend on the parameter $\gamma$, since the energy consumption at the receiver is constant, irrespective of the selected hyperparameters  $\lambda^{\rm s}$ and $\lambda^{\rm d}$.
  
\end{itemize}

\subsection{High-fidelity Digital Twin}
We first consider a scenario in which the digital twin implements an accurate model of the channel so that the simulated channel $\Tilde{\mathbf{h}}$ follows the same distribution $p(\mathbf{h})$ as the true channel $\mathbf{h}$. For both simulated and real channels, we adopt here the standard 3GPP TR 38.901 channel model generated by Sionna, an open-source library for simulating the physical layer of wireless communication systems \cite{sionna}. We use a tapped delay line channel model from the 3GPP TR38901 specification with six paths.

To illustrate the operation of DT-LTT, Fig.~\ref{paretof}(a) presents as black and red dots the expected loss and the energy consumption plus the weighted set size estimated by the digital twin via \eqref{erisk}, \eqref{epower} and \eqref{esize} for a given realization of dataset $\mathcal{D}^{\rm DT}$ and realization of the simulated channels, when the hyperparameters $\mv \lambda$ are chosen within the mentioned grid of values.  

As seen in the figure, the expected loss and energy consumption plus weighted set size are conflicting objectives, since no hyperparameter vector $\mv \lambda$ exists that yields simultaneously the smallest loss and the smallest energy or the smallest set size. The Pareto optimal points, within the set of chosen options, are depicted as red points, constituting the set $\Lambda$ of candidates produced by the digital twin. During on-air calibration, the candidates in set $\Lambda$ are further evaluated in order of the value of the loss estimated at the digital twin.

To elaborate, in Fig.~\ref{paretof}(b), we show weighed sum of energy consumption and set size estimated during on-air calibration using one realization of the dataset $\mathcal{D}^{\rm PT}$ and channel transmissions for hyperparameters within the set $\Lambda$. As detailed in Algorithm \ref{rca}, the on-air calibration estimates the loss, energy and set size using \eqref{crisk}, \eqref{cpower} and \eqref{csize}, starting from the candidate yielding the smallest value of the loss estimated at the digital twin, and stopping once the loss estimated on the physical system exceeds the threshold $\psi(\alpha, \delta)$. Here we set $\alpha=0.2$ and $\delta=0.05$. The final solution selected by the PT is represented by the star. Note that the PT does not need to evaluate hyperparameters that result in an expected loss larger than the threshold $\psi(\alpha, \delta)$.

\begin{figure*}[htp]
	\centering
	\includegraphics[width=4.7in]{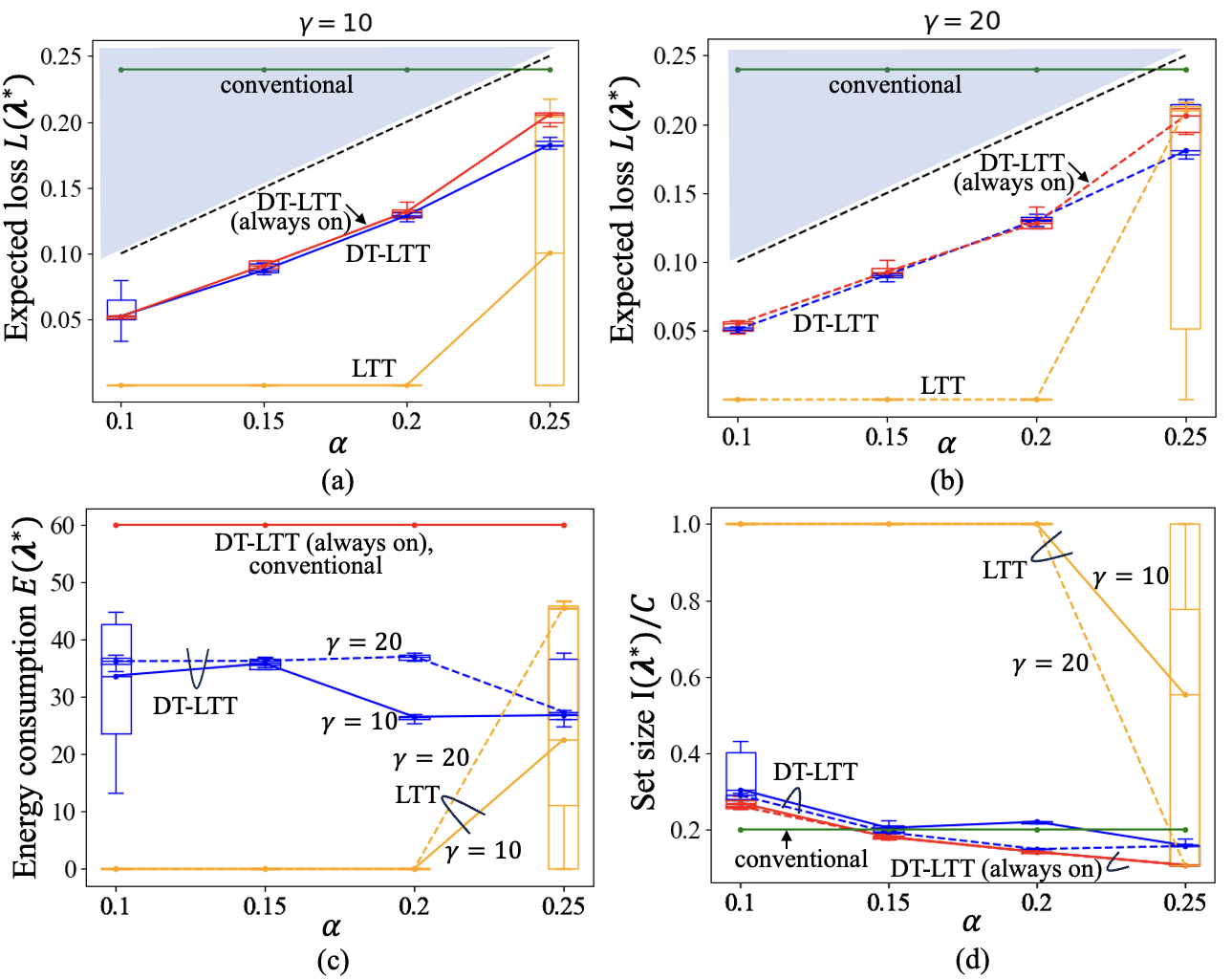}
	\caption{(a)-(b) Expected loss $L(\mv \lambda^*)$ versus the reliability target $\alpha$.  (c) Energy consumption $E(\mv \lambda^*)$ versus the reliability target $\alpha$. (d) Average normalized predicted set size $I(\mv \lambda^*)/C$ as a function of the reliability target $\alpha$ (with $L^{\rm max}=60$ and  $\delta=0.05$). }
	\label{LTT}
\end{figure*} 

In Fig.~\ref{LTT}, we validate the reliability, energy consumption and informativeness of the decisions produced by the calibrated system as a function of the target miscoverage loss $\alpha$ with $\delta=0.05$.  The ground-truth expected loss $L(\mv \lambda^*)$, energy consumption $E(\mv \lambda^*)$ and set size $I(\mv \lambda^*)$ are obtained by averaging over the test set. In Fig.~\ref{LTT}(a)-(b), the shaded area corresponds to average miscoverage losses that do not satisfy the average constraint \eqref{eq:goal}. In a manner consistent with Theorem 1, we fix a single realization of dataset $\mathcal{D}^{\rm DT}$, simulated channels at the digital twin, and real channels, and evaluate the variability of expected loss, energy consumption, and normalized set size with respect to the realization of dataset $\mathcal{D}^{\rm PT}$. Specifically, each box spans the interquartile range of the corresponding random quantity, with a line indicating the median, while the whiskers extend from the box to show the overall range of the observed values. 

From Fig.~\ref{LTT}(a) and Fig.~\ref{LTT}(b), the conventional calibration scheme fails to meet the reliability requirement, while the basic LTT scheme selects conservative hyperparameters for $\alpha=0.1$, $\alpha=0.15$ and $\alpha=0.2$, by including all classes in the predicted set, leading to zero expected loss. In contrast, the proposed DT-LTT schemes are guaranteed to meet the probabilistic reliability requirement \eqref{eq:goal}  as per Theorem 1. Furthermore,  as the allowed miscoverage probability $\alpha$ increases, the expected loss obtained with DT-LTT also grows accordingly. 

Looking now at the bottom part of Fig.~\ref{LTT}, it is observed that  the DT-LTT  scheme with an always-on receiver is over-conservative, yielding a large energy consumption, which does not adapt to varying reliability requirements $\alpha$ (Fig.~\ref{LTT}(c)). This is because this scheme is not given the freedom to keep the main radio of the receiver off in an adaptive manner. In contrast, DT-LTT is able to adjust the energy consumption to the tolerated unreliability level  $\alpha$, reducing the energy consumption accordingly.

The reduction in energy consumption afforded by a larger value of $\alpha$ depends on the design parameter $\gamma$, which dictates the relative importance of decreasing the predicted set size. In particular, increasing $\gamma$ cause the DT-LTT calibration schemes to  further reduce the set size  as $\alpha$ increases, as a smaller set can support a larger miscoverage rate $\alpha$. In this regard, for DT-LTT with $\gamma=10$, the set size initially decreases and then increases with $\alpha$. This is due to the importance attributed by calibration to lowering energy consumption, which calls for a larger predicted set to meet the reliability condition. Conversely, with $\gamma=20$, the set size consistently decreases with $\alpha$, as the primary objective is to minimize the set size. 

We have also carried out experiments with the DVS128 Gesture dataset and the performance results are qualitatively very similar to Fig. 6, and thus we have decided not to include them due to lack of space.

\subsection{Impact of Digital Twin Fidelity}
In practice, the digital twin may employ simplified or approximated models of the physical system due to computational limitations or modeling errors. In this subsection, we evaluate the impact of a mismatch between the ground-truth physical system and the digital twin model. To this end, in this experiment, the true channel is generated by using  ray tracing in a street canyon scene with cars by following Nvidia's Sionna \cite{sionna}. In contrast, the digital twin model assumes the standard tapped delay line channel model from the 3GPP TR38901 specification with a variable number of paths $N^{\rm P}_{\rm DT}$  \cite{sionna}.  Consequently, the  digital twin uses a mismatched simulator,  which follows a statistical model, rather than one that is adapted to the geometry under which the real channels are generated via ray tracing. The level of real-to-simulation mismatch can be partly controlled via the choice of the  number of paths $N^{\rm P}_{\rm DT}$. Furthermore, we also show the performance of DT-LTT when using a channel model matched to the real channels.  We set $\alpha=0.2$, $\delta=0.05$, and $\gamma=10$.

\begin{figure*}[t!]
	\centering
	\includegraphics[width=6.7in]{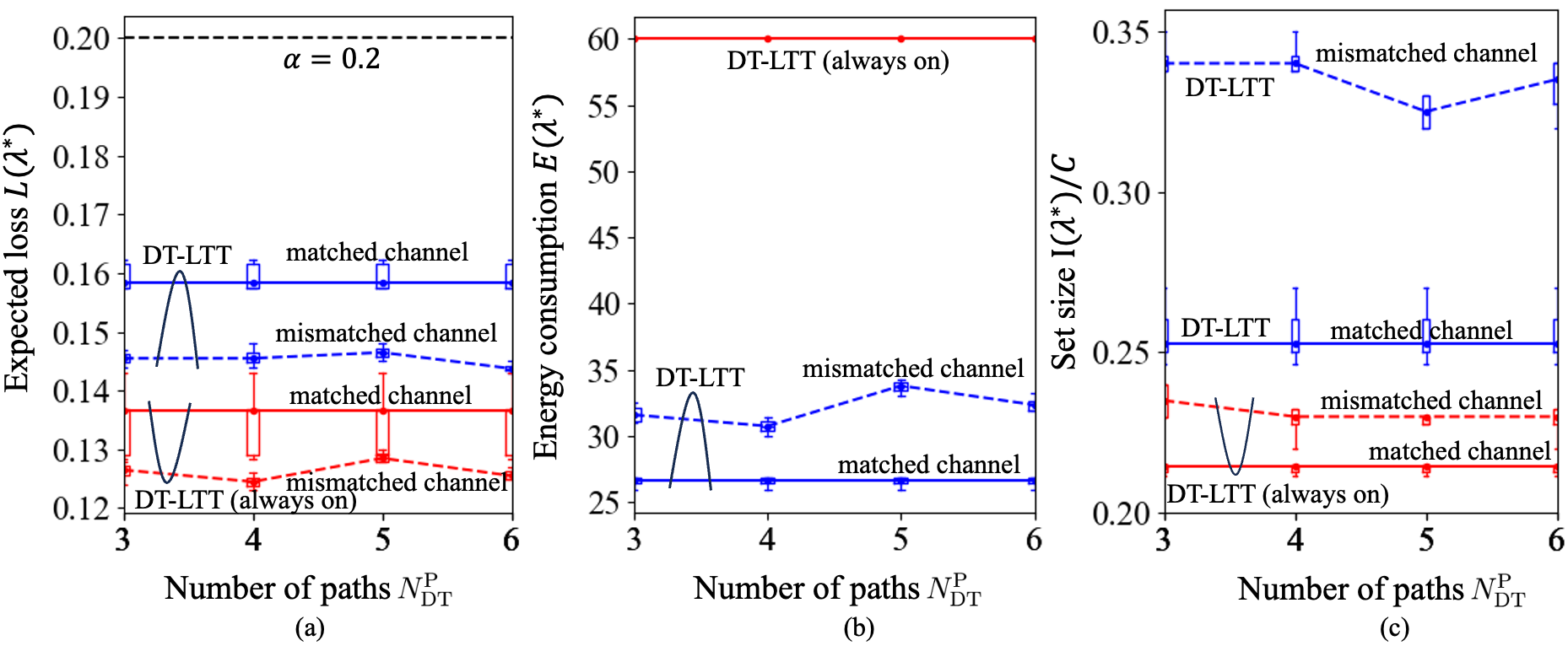}
	\caption{(a) Expected loss $L(\mv \lambda^*)$ versus the number of paths $N^{\rm P}_{\rm DT}$ of the channel simulated at digital twin. (b) Energy consumption $E(\mv \lambda^*)$ versus the number of paths $N^{\rm P}_{\rm DT}$ of the channel simulated at digital twin. (c) Average normalized predicted set size as a function of the number of paths $N^{\rm P}_{\rm DT}$ of the channel simulated at digital twin with $\alpha=0.2$, $\delta=0.05$ and $L^{\rm max}=60$.}
	\label{r4}
\end{figure*} 

In Fig.~\ref{r4}, we present the expected loss, energy consumption, and the normalized set size as a function of the number of paths $N^{\rm P}_{\rm DT}$ in the simulated channel in digital twin. As shown in Fig.~\ref{r4}(a), DT-LTT ensures the reliability condition \eqref{eq:goal} irrespective of the fidelity of the digital twin. Furthermore, as seen in Fig.~\ref{r4}(b), higher energy is required for mismatched DT model in order to achieve the reliability condition. Finally, as illustrated in Fig.~\ref{r4}(c), a richer DT model, with a larger number of paths, supports the selection of hyperparameters that reduce the set size, improving the informativeness of the decision at the receiver.

\section{Conclusions} \label{con}
This paper has introduced a novel architecture that integrates wake-up radios into a split neuromorphic computing system. A key challenge in this integration lies in determining thresholds for sensing, WUS detection, and decision-making processes so that the system maintains an expected decision-making loss below a pre-defined target level. To tackle this problem, we have proposed a digital twin-based calibration algorithm that ensures the reliability of the receiver's decision, while also optimizing a desired trade-off between energy consumption and informativeness of the decision. By leveraging a digital twin of the system, the use of on-air resources for calibration is reduced. Experimental results demonstrated the effectiveness of the proposed algorithm, confirming the theoretical guarantees on reliability, which hold irrespective of the data distribution and of the fidelity of the digital twin.

Future research may explore a hardware-based evaluation of the proposed solution, encompassing integrated sensing, computation, and communication \cite{ke2024neuromorphic}. In terms of algorithm extensions, future work may consider incorporating delay-adaptive decision making by producing an early output once the system is confident in the inference results \cite{chen2023spikecp, chen2024agreeing}.

\section*{Appendix: Proof of Theorem 1} 
The reliability condition \eqref{theo} is a consequence of the properties of LTT \cite{angelopoulos2021learn}, which is leveraged by DT-LTT via the Pareto testing method introduced in \cite{laufer2022efficiently}. As detailed next, LTT formulates the problem of hyperparameters selection in the framework of multiple-hypothesis testing. 

Consider first a single hyperparameter vector $\mv \lambda$, and define  the null hypothesis
\begin{align}
    \mathcal{H}(\mv \lambda): L^{\rm PT}(\mv \lambda) > \alpha 
\end{align} 
that the hyperparameter vector $\mv \lambda$ does not guarantee the desired reliability level $\alpha$, where $L^{\rm PT}(\mv \lambda)\in[0,1]$ is assumed to be bounded. Rejecting hypothesis $\mathcal{H}(\mv \lambda)$ implies that the calibration algorithms deems that the hyperparameter vector $\mv \lambda$ ensures the reliability condition $L^{\rm PT}(\mv \lambda) \leq \alpha$ in \eqref{eq:goal}.

To decide whether to accept or reject the null hypothesis $\mathcal{H}(\mv \lambda)$, one can evaluate  a p-value associated with hypothesis $\mathcal{H}(\mv \lambda)$, such as 
\begin{align}
    p(\mv \lambda) =e^{-2|\mathcal{D}^{\rm PT}|(\alpha-\hat{L}^{\rm PT}(\scalebox{0.8}{\mv \lambda}))^2_{+}}. \label{pvalue}
\end{align}
The quantity \eqref{pvalue} is indeed a valid p-value for the null hypothesis $\mathcal{H}(\mv \lambda)$ since the probability 
\begin{align}
    \Pr[p(\mv \lambda)\leq \delta] \leq \delta  \label{pvalid}
\end{align}
holds for $\delta\in[0,1]$, with the probability $\Pr[\cdot]$ evaluated with respect to the distribution of dataset $\mathcal{D}^{\rm PT}$ and the true channels  $\{\mathbf{h}_n \sim p(\mathbf{h})\}_{n=1}^{|\mathcal{D}^{\rm PT}|}$ under the null hypothesis $\mathcal{H}(\mv \lambda)$. The inequality \eqref{pvalid} is  verified by Hoeffding’s inequality due to the boundedness of the assumed loss \cite{angelopoulos2021learn}. 

Plugging \eqref{pvalue} into \eqref{pvalid}, the inequality \eqref{pvalid} is equivalent to the condition $\Pr[\hat{L}^{\rm PT}(\mv \lambda) \leq \psi(\alpha, \delta)]\leq \delta$ for any fixed hyperparameter $\mv \lambda$. Therefore, if the inequality $\hat{L}^{\rm PT}(\mv \lambda) \leq \psi(\alpha, \delta)$ is verified, so is the required reliability condition \eqref{eq:goal}.

The discussion so far has focused on a single hyperparameter $\mv \lambda$. However, DT-LTT considers multiple hypotheses $\mathcal{H}(\mv \lambda)$ corresponding to different candidate hyperparameter vectors  $\mv \lambda$. To this end, DT-LTT follows fixed sequence testing via Pareto testing \cite{laufer2022efficiently}. Accordingly, the hyperparameter vectors are tested sequentially stopping as soon as the first hyperparameter vector $\mv \lambda$ is found for which hypothesis $\mathcal{H}(\mv \lambda)$ is accepted. By \cite[Algorithm 1]{angelopoulos2021learn}, this guarantees that all the hyperparameters associated with the rejected hypotheses ensure the reliability condition $L^{\rm PT}(\mv \lambda) \leq \alpha$  with probability at least $1-\delta$. Finally, the conservative hyperparameter $\mv \lambda=[\lambda^{\rm s}=\infty, \lambda^{\rm w}=\infty, \lambda^{\rm d}=\infty]$ also satisfies the reliability condition \eqref{eq:goal}, since the predicted set $\mathcal{C}$ always includes the true label, concluding the proof.

\small{
\bibliographystyle{IEEEtran}
\bibliography{references}}

\end{document}